\documentclass{eptcs-modified}

\usepackage{amsmath}
\usepackage{amssymb}
\usepackage{url}
\usepackage{amsthm}
\usepackage{graphicx}
\usepackage{amscd}
\usepackage{stmaryrd}
\usepackage[usenames,dvipsnames,svgnames,table]{xcolor}
\usepackage{todonotes}
\usepackage{cleveref}
\usepackage{mdframed}
\usepackage{extarrows}
\usepackage{mathtools}
\usepackage{thm-restate}
\usepackage[T5,T1]{fontenc}

\begin{document}

\theoremstyle{definition}
\newtheorem{theorem}{Theorem}
\newtheorem{definition}[theorem]{Definition}
\newtheorem{problem}[theorem]{Problem}
\newtheorem{assumption}[theorem]{Assumption}
\newtheorem{corollary}[theorem]{Corollary}
\newtheorem{proposition}[theorem]{Proposition}
\newtheorem{example}[theorem]{Example}
\newtheorem{lemma}[theorem]{Lemma}
\newtheorem{claim}[theorem]{Claim}
\newtheorem{observation}[theorem]{Observation}
\newtheorem{fact}[theorem]{Fact}
\newtheorem{question}[theorem]{Open Question}
\newtheorem{conjecture}[theorem]{Conjecture}
\newtheorem{addendum}[theorem]{Addendum}
\newcommand{\uint}{{[0, 1]}}
\newcommand{\Cantor}{{\{0,1\}^\mathbb{N}}}
\newcommand{\name}[1]{\textsc{#1}}
\newcommand{\id}{\textrm{id}}
\newcommand{\dom}{\operatorname{dom}}
\newcommand{\Dom}{\operatorname{Dom}}
\newcommand{\codom}{\operatorname{CDom}}
\newcommand{\spec}{\operatorname{spec}}
\newcommand{\opti}{\operatorname{Opti}}
\newcommand{\optis}{\operatorname{Opti}_s}
\newcommand{\Baire}{{\mathbb{N}^\mathbb{N}}}
\newcommand{\hide}[1]{}
\newcommand{\mto}{\rightrightarrows}
\newcommand{\Sierp}{Sierpi\'nski }
\newcommand{\BC}{\mathcal{B}}
\newcommand{\C}{\textrm{C}}
\newcommand{\CC}{\textrm{CC}}
\newcommand{\UC}{\textrm{UC}}
\newcommand{\lpo}{\textrm{LPO}}
\newcommand{\llpo}{\textrm{LLPO}}
\newcommand{\leqW}{\leq_{\textrm{W}}}
\newcommand{\leW}{<_{\textrm{W}}}
\newcommand{\equivW}{\equiv_{\textrm{W}}}
\newcommand{\equivT}{\equiv_{\textrm{T}}}
\newcommand{\geqW}{\geq_{\textrm{W}}}
\newcommand{\pipeW}{|_{\textrm{W}}}
\newcommand{\nleqW}{\nleq_\textrm{W}}
\newcommand{\leqsW}{\leq_{\textrm{sW}}}
\newcommand{\equivsW}{\equiv_{\textrm{sW}}}
\newcommand{\Sort}{\operatorname{Sort}}
\newcommand{\aouc}{\mathrm{AoUC}}
\newcommand{\pitc}{\Pi^0_2\textrm{C}}
\newcommand{\powfin}{\mathcal{P}_{\textnormal{fin}}}
\newcommand{\Wei}{\mathfrak{W}}
\newcommand{\ptWei}{\mathfrak{W}_\bullet}
\newcommand{\eqpt}{=_\bullet}
\newcommand{\lept}{\le_\bullet}
\newcommand{\Nat}{\mathbb{N}}
\newcommand{\Var}{\mathcal{V}}
\newcommand{\partto}{\rightharpoonup}
\newcommand{\interpG}[1]{\llbracket #1 \rrbracket}
\newcommand{\interp}[1]{\llbracket #1 \rrbracket}
\newcommand\slice{\mathrm{Slice}}
\newcommand{\bnfeq}{\mathrel{::=}}
\newcommand{\bnfalt}{\; | \;}
\newcommand\cL{\mathcal{L}}
\newcommand\RegE[1]{RE_{#1}}
\newcommand\cS{\mathcal{S}}
\newcommand\simulatedBy{\cS}
\newcommand{\N}{\mathbb{N}}

\newcommand\cecilia[1]{\todo[inline,color=yellow!40]{Cécilia: #1}}

\newcommand\arno[1]{\todo[inline,color=green!40]{Arno: #1}}

\newcommand\eike[1]{\todo[inline,color=orange!40]{Eike: #1}}

\title{The equational theory of the Weihrauch lattice with multiplication}

\author{
Eike Neumann
\institute{Department of Computer Science\\Swansea University, Swansea, UK\\}
\email{e.f.neumann@swansea.ac.uk}
\and
Arno Pauly
\institute{Department of Computer Science\\Swansea University, Swansea, UK\\}
\email{Arno.M.Pauly@gmail.com}
\and
Cécilia Pradic
\institute{Department of Computer Science\\Swansea University, Swansea, UK\\}
\email{c.pradic@swansea.ac.uk}
}

\def\titlerunning{Theory of Weihrauch degrees}
\def\authorrunning{E.~Neumann, A.~Pauly \&  C.~Pradic}
\maketitle

\begin{abstract}
We study the equational theory of the Weihrauch lattice with multiplication, meaning the collection of equations between terms built from variables, the lattice operations $\sqcup,\sqcap$, the product $\times$, and the finite parallelization $^*$ which are true however we substitute Weihrauch degrees for the variables. We provide a combinatorial description of these in terms of a reducibility between finite graphs, and moreover, show that deciding which equations are true in this sense is complete for the third level of the polynomial hierarchy.
\end{abstract}

\subsubsection*{Erratum}
The first version of this preprint claimed that
our theory was not complete because $a \times (b \sqcap c) \le a \times (b \sqcap (c \times a))$ was not derivable without the pointedness axiom. This is false.
We have removed the erroneous statement and
added a proof that the axiomatization we propose is complete for Weihrauch degrees
if and only if it is complete for pointed degrees if $1 \le a$ is added (Theorem~\ref{thm:completeness-pointed-nonpointed-equiv}).

\section{Introduction}
The Weihrauch degrees $\mathfrak{W}$ come with a rich algebraic structure. Here, we consider the lattice operations $\sqcup, \sqcap$, the product $\times$ and the finite parallelization $^*$. They were the first operations on Weihrauch degrees studied in the literature \cite{brattka2,brattka3,paulyreducibilitylattice}. In order to better understand the structure of the Weihrauch degrees, we would like to characterize its equational theory, i.e.~we want to identify which equations between terms over the signature $(\mathfrak{W},\sqcap, \sqcup, \times, 1, (-)^*)$ are true for every instantiation of the variables by Weihrauch degrees. It was already observed in \cite{paulybrattka4} that the equational theory of $(\mathfrak{W},\sqcap, \sqcup)$ is the theory of distributive lattices, as $(\mathfrak{W},\sqcap, \sqcup)$ is a distributive lattice itself, and every countable distributive lattice embeds into $(\mathfrak{W},\sqcap, \sqcup)$ (via the Medvedev degrees).

\subsubsection*{Context}
Shortly after the precise definition of Weihrauch reducibility was proposed in \cite{gherardi}, Brattka and Gherardi raised the question whether the Weihrauch degrees $\mathfrak{W}$ form a Heyting or Brouwer algebra, in line with the overall idea that the structure of the Weihrauch degrees should be some kind of structure of constructive truth values. The question was answered in the negative in \cite{paulykojiro}; in particular, the Weihrauch degrees are not complete.

The broader question to what extend the Weihrauch degrees form an instance of already studied classes of structures linked to (constructive) logics remains an active area of research. While similarities and tentative connections are easy to find, a satisfactory answer is still eluding the community. A better understanding of the structure of the Weihrauch degrees seems integral to further progress, in particular for the hope to find the Weihrauch degrees to even be universal for some such logic.

\subsubsection*{Our contributions}

We start by investigating the equational theory of $(\mathfrak{W},\sqcap, \times)$. While we only conjecture that the axiomatization we propose for it is complete, we provide a combinatorial characterization in terms of reductions between finite graphs. This in turn will allow us to show that determining universal validity of equations in $(\mathfrak{W},\sqcap, \times)$ is $\Sigma^p_2$-complete. Since we only work with substructures that include $\sqcap$, we can write inequalities instead of equations by treating $t \le u$ as an abbreviation for $t = t \sqcap u$.

A Weihrauch degree is called \emph{pointed} if its representatives have some computable instance. 
We denote by $\ptWei \subset \Wei$ the subset of pointed degrees and write $\lept$ and $\eqpt$ instead of $\le$ and $=$ when we mean inequalities and equations that are only meant to be valid for pointed degrees. One easy observation is that if an (in)equality holds in $\Wei$, it also holds in $\ptWei$.

Occasionally a variant of Weihrauch reducibility called \emph{strong} Weihrauch reducibility is studied. Our operations are also operations on strong
Weihrauch degrees, which we shall denote by $\mathfrak{W}^s$, and it turns out that the equational theory of $(\mathfrak{W}^s,\sqcap,\times)$ and $(\mathfrak{W},\sqcap,\times)$ coincide. However, as the strong Weihrauch degrees are not distributive as a lattice \cite{damir}, this does not extend to signatures including $\sqcap$ and $\sqcup$.

\subsubsection*{Other work on the structure of the Weihrauch degrees}

Another investigation into the structure of the Weihrauch degrees is \cite{lmpsv}. A key result there is that $1$ is definable in just
$(\mathfrak{W},\leqW)$ -- but this involves multiple quantifiers, and thus is not directly relevant to the equational theory we study here.

At first glance, \cite{hertlingselivanov} seems to have a similar approach to our work. Hertling and Selivanov consider the complexity of a combinatorial reducibility which earlier work by Hertling \cite{hertling} tied to continuous Weihrauch reducibility. However, nothing seems to be known about definability of the fragment of the continuous Weihrauch degrees which is characterized by the combinatorial reducibility. 

\subsubsection*{Related work in proof theory}

We are not aware of axioms systems matching exactly those we provided for $(\ptWei, \sqcap, \times)$ in the literature.
Despite of this, there are tentative connections to be made between notions of correctness in substructural logics being phrased in terms of graphs
and our notion of reduction in terms of graphs. In particular, \cite{bgl} has a strikingly similar notion of reducibility between graphs generalizing boolean formulas.
this seems to indicate casting our axioms or our general notion of graph reducibility in a deep inference system might be of relevance in uncovering a complete axiomatization
for the theories we consider.

\section{Introducing the operations}
A general reference on Weihrauch degrees is the survey article \cite{survey-brattka-gherardi-pauly}. Here, we will recap briefly the operations we study. As we are interested in only the structure of the Weihrauch degrees, we can avoid introducing represented spaces, and instead work with the following:

\begin{definition}
Let $f,g : \subseteq \Cantor \mto \Cantor$ be multivalued functions. We write $f \leqW g$ and say that $f$ is \emph{Weihrauch reducible} to $g$ if there are computable $H : \dom(f) \to \dom(g)$ and (partial) $K : \Cantor \times \Cantor \partto \Cantor$ such that for all $x \in \dom(f)$ and $y \in g(H(x))$ we have that $K(x,y)$ is defined and $K(x,y) \in f(x)$.

The Weihrauch degrees are the equivalence classes for $\leqW$. We denote them by $\mathfrak{W}$. If $K$ does not depend on its first input, we have a strong Weihrauch reduction $f \leqsW g$. The strong Weihrauch degrees are denoted by $\mathfrak{W}^s$.
\end{definition}

Let $\langle \ \rangle : \Cantor \times \Cantor \to \Cantor$ be a standard pairing function. The operations we investigate are the following:

\begin{definition}
Given multivalued functions $f,g : \subseteq \Cantor \mto \Cantor$ , we define
\begin{itemize}
\item $f \sqcap g$ by $0p \in (f \sqcap g)(\langle q_0,q_1\rangle)$ if $p \in f(q_0)$ and  $1p \in (f \sqcap g)(\langle q_0,q_1\rangle)$ if $p \in g(q_1)$;
\item $f \sqcup g$ by $0p \in (f \sqcup g)(0q)$ if $p \in f(q)$ and $1p \in (f \sqcup g)(1q)$ if $p \in g(q)$;
\item $f \times g$ by $\langle p_0,p_1\rangle \in (f \times g)(\langle q_0,q_1\rangle$ if $p_0 \in f(q_0)$ and $p_1 \in g(q_1)$.
\end{itemize}
\end{definition}

These operations on multivalued functions induce operations on Weihrauch degrees. Moreover, the operations are monotone. The Weihrauch degrees are a lattice, with $\sqcap$ as meet and $\sqcup$ as join. 

We also make use of the constant $1$, which is the Weihrauch degree of $\operatorname{id} : \Cantor \to \Cantor$. The representatives of $1$ are exactly the computable multivalued functions whose domain contains a computable point. The upper cone of $1$ (i.e.~$\{f \mid 1 \leqW f\}$) is the class of pointed Weihrauch degrees $\ptWei$.

\begin{definition}
Given a multivalued function $f : \subseteq \Cantor \mto \Cantor$, we define $f^*$ by $0^\omega \in f^*(1p)$ and $\langle p_0,p_1\rangle \in f^*(0^{n+1}1\langle q_0, q_1\rangle)$ if $p_0 \in f(q_0)$ and $p_1 \in f^*(0^n1q_1)$.
\end{definition}

Plainly spoken, $f^*$ receives a number $n \in \mathbb{N}$ together with $n$ inputs for $f$ and outputs $n$ corresponding solutions.

\begin{figure}

\begin{center}
Basic properties of $(\Wei, \sqcap, \times, 1)$

\begin{mdframed}
\[\begin{array}{cr}
a \times (b \times c) = (a \times b) \times c \qquad a \times 1 = a & \text{monoid structure}\\
a \times b = b \times a & \text{commutativity} \\
a \le a \times a & \text{relevance}\\
a \sqcap b \le a \qquad a \sqcap b \le b  & \text{$\sqcap$ is a lower bound} \\
a \le b \; \wedge \; a \le c ~ \Rightarrow ~ a \le b \sqcap c & \text{$\sqcap$ is the greatest lb} \\
(a \times b) \sqcap c \le a \times (b \sqcap c)  & \text{half-distributivity}
\end{array}\]
\end{mdframed}

Additional axiom for the pointed case $(\ptWei, \sqcap, \times, 1)$
\begin{mdframed}
\[\begin{array}{cr}
1 \lept a & \text{bottom element}\\
\end{array}\]
\end{mdframed}
\end{center}
\caption{Our set of axioms for the systems under consideration, where we implicitly assume that equality is an equivalence relation, that all operators are monotone in each of their arguments and that inequalities are transitive.
Every axiom we state holds when substituting $\lept$ for $\le$ everywhere, and we take this as axioms for $\lept$ alongside the specific $\lept$ axioms.
}
\label{fig:axiom-base}
\end{figure}

At first, we investigate the structures $(\mathfrak{W},\sqcap,\times,1)$ and $(\ptWei, \sqcap, \times, 1)$. A partial axiomatization for these is provided in~\Cref{fig:axiom-base}. We do not explicitly list monotonicity of the operations and transitivity of $\leq$. To express the axioms, we will only need Horn clauses, that is formulas of the shape $\bigwedge_i t_i \le u_i \Rightarrow t' \le u'$. 
Our partial axiomatization essentially states that $\sqcap$ does behave like a greatest lower bound, that $\times$ and $1$ equip $\Wei$ with
a relevant commutative ordered monoid structure and that $\times$ \emph{half-distributes} over $\sqcap$.
Proofs that they hold in $\Wei$ are not difficult and can be found in \cite{paulybrattka4}.

We were unable to find a true inequality in $(\Wei, \sqcap, \times, 1)$ that does not follow from the stated axioms, and thus offer up the following:

\begin{conjecture}
$(\Wei, \sqcap, \times, 1)$ is fully axiomatized by~\Cref{fig:axiom-base}.
\end{conjecture}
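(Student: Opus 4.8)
The plan is to establish completeness by comparing the syntactic derivability relation with the combinatorial characterization of validity in $(\ptWei,\sqcap,\times,1)$ developed earlier in the paper. Soundness, i.e.\ that each inequality derivable from \Cref{fig:axiom-base} holds in $\ptWei$, is already in hand. By the combinatorial characterization, $t \lept u$ is valid precisely when the finite graph $G_t$ associated with $t$ reduces to the graph $G_u$ associated with $u$, so the task reduces to the purely proof-theoretic statement: \emph{whenever $G_t$ reduces to $G_u$, the inequality $t \lept u$ is derivable}. I would attack this by assigning to every term a canonical graph, translating each axiom into a local rewrite on graphs, and then trying to factor an arbitrary graph reduction into a finite sequence of such rewrites.

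The first technical step is to pin down, for each axiom, the corresponding move on graphs and to check that these moves are sound and generate the right order. The monoid laws and commutativity merely fix the chosen presentation; the greatest-lower-bound axioms for $\sqcap$ give the branching structure; relevance $a \lept a\times a$ allows duplicating a vertex; half-distributivity $(a\times b)\sqcap c \lept a\times(b\sqcap c)$ is the one move that genuinely reorganises a product factor across a meet; and the pointed axiom $1 \lept a$ supplies weakening, letting one delete vertices and pad with trivial factors. It is worth noting that this last move is exactly what makes $a\times(b\sqcap c) \lept a\times(b\sqcap(a\times c))$ derivable in the pointed setting while it is not derivable in $\Wei$; thus the separating example behind the incompleteness of \Cref{fig:axiom-base} over $\Wei$ disappears once $1 \lept a$ is available, which is the main reason to expect completeness here.

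The crux — and the reason the statement remains a conjecture — is the converse factorisation: showing that \emph{every} reduction $G_t \to G_u$ decomposes into these local moves. The moves are local and directed, and in particular half-distributivity is a single Horn clause pushing one product factor inward, whereas a valid reduction may coordinate the competing branches of several nested meets simultaneously; since deciding validity is $\Sigma^p_2$-complete, graph reductions are genuinely complex global objects, and it is not at all clear that they always admit a local presentation. I would try to prove factorisation by induction on the structure of $G_u$, peeling off one meet-branch at a time and using an exchange argument to commute independent product factors past one another, but controlling the interaction between padding and half-distributivity inside a nested meet is where I expect the argument to either close or break.

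As a parallel strategy I would pursue the universal-model method: instead of simulating reductions, construct for each pair with $t \lept u$ underivable a single instantiation of the variables by concrete pointed degrees witnessing $\interp{t}\nleqW\interp{u}$, so that completeness follows by contraposition. The obstacle there is dual — one must exhibit degrees generic enough to separate everything the axioms leave unrelated while still validating all consequences of half-distributivity — but a successful construction would have the advantage of reproving the combinatorial characterization from the algebraic side.
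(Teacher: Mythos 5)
The statement you are asked to prove is stated in the paper as a \emph{conjecture}: the authors explicitly say they were unable to find a counterexample and offer completeness of \Cref{fig:axiom-base} for $(\ptWei,\sqcap,\times,1)$ as an open problem, so there is no proof in the paper to compare against. Your proposal does not close the conjecture either, and to your credit you say so: the entire weight of the argument rests on the factorisation claim that every combinatorial ($\bullet$-)reduction $\interpG{t}\to\interpG{u}$ decomposes into local moves corresponding to the axioms, and you explicitly flag that this is where the argument ``either closes or breaks.'' That is precisely the missing idea, not a detail to be filled in: the axioms are Horn clauses acting locally on the term tree, while a $\bullet$-reduction is a global object constrained only by the maximal-clique condition, and nothing in your sketch (induction on $G_u$, exchange arguments, peeling off meet-branches) supplies a reason why the global object must factor. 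Your framing of the problem is correct and matches the paper's own setup --- soundness is known, Theorem~\ref{lem:combred-valid} reduces validity to combinatorial reducibility, and the pointed axiom $1\lept a$ is exactly what repairs the separating example $a\times(b\sqcap c)\lept a\times(b\sqcap(a\times c))$ --- but a correct framing of an open problem is not a proof of it.

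For what it is worth, the paper's final section makes partial progress along lines close to your first strategy: Proposition~\ref{proposition : reduction wlog total bijective} shows one may assume the combinatorial reduction is total and bijective (at the cost of derivable pre- and post-composition), and from this the authors derive completeness for inequalities $t\lept u$ with $u$ square-free, by induction on $u$ peeling off product factors $u'\times x$ --- essentially your ``peel one branch at a time'' idea in the special case where the right-hand side has no nested $\times$ inside a $\sqcap$ inside a $\times$. The case your induction would need and cannot currently handle is exactly the non-square-free one, where a product factor sits inside a meet and half-distributivity must be coordinated across several branches simultaneously; this is where the known argument stops. Your second, dual strategy (building a separating instantiation for each underivable inequality) is also reasonable but is essentially a restatement of what a completeness proof must accomplish, given that the generic instantiation of Definition~\ref{def:gen-valid} already characterises validity; the gap there is identical.
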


We also show later that the axiomatization is complete for $\Wei$
if and only if it the pointed version is complete for $\ptWei$ (Theorem~\ref{thm:completeness-pointed-nonpointed-equiv}).

\section{A combinatorial notion of reducibility for terms}
\label{section:combinatorial}
The goal of this section is to give a combinatorial criterion to check the validity of
an inequality. Along the way we will see that the validity of an inequality where the semantics of the variables may range over all possible degrees can be reduced to the validity of the inequality where the variables are interpreted as sufficiently independent degrees. Let us first formally state the notion we investigate:

\begin{definition}
\label{def:univ-valid}
Let $u$ and $v$ be terms constructed from variables, the constant $1$ and the binary connectives $\sqcap, \times$. We say that an inequality $t \leq u$
  (respectively $t \lept u$) is \emph{universally valid} when for every interpretation of the variable as Weihrauch degrees (respectively pointed Weihrauch
  degrees), there is a corresponding Weihrauch reduction.
\end{definition}

Clearly, if $t \leq u$ is universally valid, then so is $t \lept u$. Conversely, if $t \lept u$ is universally valid, then $t \leq u$ is universally valid if and only if every variable occurring in $u$ also occurs in $t$. The reason for this is that the only difference between a Weihrauch reduction to $f$ and a Weihrauch reduction to $f \sqcup 1$ is that for the latter we do not necessarily need access to an instance for $f$. If every variable in $u$ appears also in $t$, then for every instantiation of the variables, we have available to us valid inputs for all Weihrauch degrees.

Now we prepare to define a combinatorial reduction between terms. To do so, we first need to introduce an interpretation of terms as graphs.

\begin{definition}
\label{def:cograph-interp}
A (finite coloured undirected) graph is a triple $(V, E, c)$ where $V$ is a finite set of vertices, $E \subseteq [V]^2$ is a set of edges and  $c : V \to \Var$ a colouring of the vertices by variables.
\end{definition}

\begin{definition}
Write $G + H$ for the disjoint union of two graphs and $G^\bot$ for $(V, [V]^2 \setminus E, c)$ when $G = (V, E, c)$.
The interpretation $\interpG{t}$ of terms $t$ as graphs is defined by induction as follows:
\begin{itemize}
\item for a variable $v$, $\interpG{v}$ is a graph with a single vertex with colour $v$
\item $\interpG{t \sqcap u} = \interpG{t} + \interpG{u}$ and
$\interpG{t \times u} = (\interpG{t}^\bot + \interpG{u}^\bot)^\bot$
\end{itemize}
\end{definition}

The graphs that arise as $\interpG{u}$ for some term $u$ are known as \emph{cographs}. They can also be characterized as those graphs that do not have an induced subgraph forming a line of length $3$ \cite{corneil}.
Below we give some examples of terms and corresponding cographs.

\begin{center}
\includegraphics{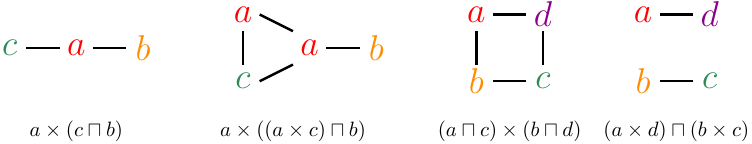}
\end{center}

The intuition is that a coloured graph $(V, E, c)$ can be read as the following
Weihrauch problem, assuming the colours correspond themselves to Weihrauch problems:
given for each vertex $v \in V$ an input $i_v$ to $c(v)$, output a maximal clique $C$ of $(V,E)$
together with solutions for every $i_v$s with $v \in C$. Under this interpretation, 
the problems denoted $\interpG{t}$ and $t$ are Weihrauch-equivalent.
This leads naturally to the following notion of combinatorial reduction between coloured graphs.

\begin{definition}
\label{def:comb-valid}
Given two graphs $G = (V_G, E_g, c_G)$ and $H = (V_H, E_H, c_H)$, call a partial function
$f : V_H \partto V_G$ a $\bullet$-\emph{reduction} of $G$ to $H$ if it respects colours (i.e., $c_H(v) = c_G(f(v))$ for all $v \in V_H$) and the image of every maximal clique in $H$ under $f$ contains a maximal clique of $G$. If $f$ is total, it is a \emph{reduction}.

An inequality $t \leq u$ (respectively $t \lept u$) is said to be \emph{combinatorially valid} if there is a reduction of $\interpG{t}$ into $\interpG{u}$ (respectively a $\bullet$-reduction).
\end{definition}

Let us illustrate the notion of combinatorial reduction on our examples.

\begin{center}
\includegraphics{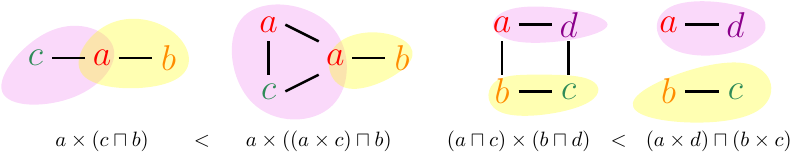}
\end{center}

In the two examples above, we have a strict reduction.
We do not specify the total
maps between vertices as they are uniquely determined by the colors in these
particular cases.
The shaded areas correspond to maximal cliques in the graphs and they have matching
colors when the reductions relate them.
There are no reductions the other ways around: for the first scenario, this is because
the clique $\{a,c\}$ on the left can only be mapped to a strict subclique of the triangular
clique on the right by a color-preserving map. For the second case, where we have a
unique color preserving-map, it is because $\{a,b\}$ does not contain a maximal clique
in the rightmost graph.

We are now ready to state the main theorem of this section:

\begin{theorem}
\label{lem:combred-valid}
Let $t$ and $u$ be terms constructed from variables and the binary connectives $\sqcap, \times$. Then inequalities $t \leq u$ and $t \lept u$ are
universally valid if and only if they are combinatorially valid.
\end{theorem}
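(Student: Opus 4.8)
The plan is to read off, for each term, the semantic meaning of the maximal cliques of its graph, and then match Weihrauch reductions with the combinatorial reductions of \Cref{def:comb-valid} on the nose. First I would establish this dictionary by induction on terms. A maximal clique of $\interpG{t}$ is exactly a ``branch'' through $t$, obtained by resolving every $\sqcap$ to one side and retaining both factors of every $\times$: in the disjoint union $\interpG{t_1 \sqcap t_2}$ a clique lies in a single summand, while in the join $\interpG{t_1 \times t_2}$ it is the union of a maximal clique from each side. Unfolding the definitions of $\sqcap$ and $\times$, this says that an input to $\interp{t}$ supplies an instance $x_a$ of the degree interpreting $c(a)$ at every vertex $a$; that solving $\interp{t}$ means producing solutions to all vertices of one maximal clique, with the solver free to choose the clique (the role of $\sqcap$ on the output side); and dually that a single call to the oracle $\interp{u}$ feeds an instance to every vertex and returns, for an adversarially chosen maximal clique $C_u$, solutions to exactly the vertices of $C_u$. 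Since a Weihrauch reduction makes exactly one oracle call, this ``one clique in, one clique out'' picture is faithful.

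With the dictionary in hand, soundness (combinatorial $\Rightarrow$ universal) is a direct construction. Given a reduction $f : V_u \partto V_t$ (total, in the case of $\leqW$), the forward functional $H$ routes inputs, feeding the $u$-vertex $w$ the instance $x_{f(w)}$ supplied at $f(w)$; this is well-typed because $f$ respects colours. When the oracle returns solutions for a clique $C_u$, these solve exactly the $t$-vertices in $f(C_u)$, which by the defining property of a reduction contains a maximal clique $C_t$ of $\interpG{t}$, so the backward functional $K$ assembles from them a solution to $\interp{t}$ along $C_t$. For the pointed inequality a partial $f$ suffices: the finitely many $u$-vertices outside $\dom(f)$ are fed a computable instance, available because the degrees are pointed, and their answers are discarded.

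The substance of the theorem is completeness (universal $\Rightarrow$ combinatorial), which I would prove contrapositively. As anticipated at the start of the section, I would fix an interpretation $I$ sending each variable $v$ to a problem $I(v)$ that is mutually independent across colours and \emph{instance-rigid}: distinct colours give mutually independent problems, and an oracle solution to one instance of $I(v)$ is useless for any other instance and cannot be combined across instances. If $t \leqW u$ is universally valid then $\interp{t} \leqW \interp{u}$ holds for $I$ via some computable $(H,K)$; I would feed $\interp{t}$ an input $x$ whose vertex-instances $(x_a)_{a \in V_t}$ are mutually generic relative to $(H,K)$. Then $H(x)$ assigns each $u$-vertex $w$ an instance $y_w$ of $I(c(w))$, and genericity forces $y_w$ to be of any use only when it copies some $x_a$ with $c(a) = c(w)$; set $f(w) = a$ in that case and leave it undefined otherwise, obtaining a colour-respecting partial map. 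Because $\interp{u}$ is multivalued, $K(x,\cdot)$ must succeed for every maximal clique $C_u$ the adversary may return; by rigidity those answers solve precisely the $t$-vertices in $f(C_u)$, while $x$ itself solves none, so the maximal clique $C_t$ that $K$ is forced to cover satisfies $C_t \subseteq f(C_u)$. Hence $f$ is a $\bullet$-reduction, contradicting its assumed nonexistence.

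The main obstacle is exactly the genericity lemma underlying ``of any use only when it copies some $x_a$'': I must rule out reductions that fabricate a helpful oracle instance by computing from the whole tuple $x$, or that reconstruct a solution to $x_a$ by post-processing oracle answers to unrelated instances. I expect to secure this with a forcing argument, choosing the $x_a$ sufficiently generic relative to $(H,K)$ so that any solution $K$ outputs for $x_a$ is traceable to an oracle answer on an instance computably equivalent to $x_a$ and of colour $c(a)$, together with the explicit construction of degrees $I(v)$ realising independence and rigidity. The pointed/non-pointed split then falls out: the argument yields a $\bullet$-reduction in both regimes, and for $\leqW$ one takes the $I(v)$ non-pointed, so that the oracle input $H(x)$ cannot even be formed unless every colour of $u$ already occurs in $t$; granting this, the partial $f$ extends to a total colour-respecting reduction at no cost, since enlarging the image of a clique only helps the defining condition. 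This reproduces the elementary remark relating the gap between $\leqW$ and $\lept$ to the variables of $u$ and $t$.
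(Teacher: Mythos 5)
Your proposal is correct and follows essentially the same route as the paper: the same clique-to-semantics dictionary, the same direct construction for combinatorial~$\Rightarrow$~universal (Lemma~\ref{lem:combinatoriallytouniversal}), and for the converse an instantiation by mutually independent, ``rigid'' degrees from which the reduction map is read off by tracing which oracle answers are actually used --- this is exactly the paper's notion of generic validity (Definition~\ref{def:gen-valid} and Lemma~\ref{lem:generictocombinatorially}), realised there by fixing a strong antichain $(a_{n,m})$ in the Turing degrees rather than by forcing relative to $(H,K)$. Your treatment of the pointed/non-pointed split (non-pointed instances forcing every colour of $u$ to occur in $t$, then totalising the partial map) also matches the paper's argument.
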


To prove the theorem, we shall introduce a third notion of validity. Assume that we have a countable enumeration $(x_n)_{n \in \Nat} \in \Var^\Nat$ of all the variables that we may ever want to use in our terms.
We pick a countable strong antichain $(a_{n,m})_{(n,m) \in \Nat^2}$ of elements of $\Cantor$ in the Turing degrees.

\begin{definition}
\label{def:gen-valid}
We say that an inequality $t \leq u$ is \emph{generically valid} if it holds when we instantiate the $n$-th variable by the function $(a_{n,0}, k) \mapsto
  a_{n, k+1} : \{a_{n,0}\} \times \Nat \to \Cantor$. We say that $u \lept v$ is \emph{generically valid} if it holds when we instantiate the $n$-th variable by the function $k \mapsto a_{n, k} : \Nat \to \Cantor$.
\end{definition}

\begin{lemma}
\label{lem:generictocombinatorially}
A generically valid inequality $t \leq u$ or $t \lept u$ is also combinatorially valid.
\end{lemma}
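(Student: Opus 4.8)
The plan is to unwind the two generic instantiations $\hat t$ and $\hat u$ at the level of instances and solutions, read off a candidate (partial) map $f$ from the forward reduction witness, and then use the Turing-independence of the antichain to force the maximal-clique condition via the backward witness. The first step is purely descriptive: from the clauses defining $\sqcap$ and $\times$ on multivalued functions, an instance of $\hat u$ must supply an instance of the appropriate variable at \emph{every} leaf (variable occurrence) of $u$, since both operations demand inputs for both arguments. A solution of $\hat u$, by contrast, resolves each $\sqcap$ by choosing one tagged branch and each $\times$ by keeping both, so the set of leaves that actually get solved is exactly a maximal clique $C$ of $\interpG{u}$, together with genuine solution values $a_{m,j}$ for the leaves of $C$. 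This mirrors the cograph semantics: $\sqcap$ is disjoint union, whose maximal cliques are inherited from one summand, while $\times$ is the join, whose maximal cliques are unions of maximal cliques of the two factors.

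Next I would fix a convenient instance and extract $f$. Choose an instance $x$ of $\hat t$ assigning pairwise distinct indices $k_\ell \ge 1$ to the leaves $\ell$ of $t$, so that the intended solutions $a_{c(\ell),k_\ell}$ are pairwise distinct antichain elements (in the non-pointed case $x$ additionally carries the keys $a_{n,0}$). Applying the forward witness $H$ gives $H(x) \in \dom(\hat u)$, and for each leaf $\ell'$ of $u$ one can read off the index $j_{\ell'}$ that $H$ posed there. Define $f : V_{\interpG{u}} \partto V_{\interpG{t}}$ by letting $f(\ell')$ be the unique leaf $\ell$ of $t$ with $c(\ell) = c(\ell')$ and $k_\ell = j_{\ell'}$, when such $\ell$ exists; distinctness of the $k_\ell$ makes this well-defined, and it respects colours by construction.

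The crux is the independence argument. Given any maximal clique $C$ of $\interpG{u}$, feed the backward witness $K$ the solution $y_C$ of $\hat u(H(x))$ that solves exactly the leaves of $C$ (with their values $a_{c(\ell'),j_{\ell'}}$). Then $K(x,y_C)$ is a solution of $\hat t(x)$, hence solves some maximal clique $D$ of $t$, emitting the value $a_{c(\ell),k_\ell}$ for every $\ell \in D$. Since $K$ is computable and the $a_{n,m}$ form a \emph{strong} antichain in the Turing degrees, $K(x,y_C)$ reduces to the join of $x$ with the finitely many antichain elements occurring in $y_C$, and it cannot output $a_{c(\ell),k_\ell}$ unless that element already appears among the values of $y_C$. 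Thus for each $\ell \in D$ there is $\ell' \in C$ with the same colour and $j_{\ell'} = k_\ell$, i.e.\ $f(\ell') = \ell$, whence $D \subseteq f(C)$ and $f(C)$ contains a maximal clique of $\interpG{t}$ — exactly combinatorial validity.

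Finally, totality versus partiality separates the two cases, and I expect the bookkeeping here, together with making the use/continuity step precise, to be the main obstacle. In the non-pointed case $H(x) \in \dom(\hat u)$ forces $H$ to synthesise the key $a_{m,0}$ for every variable $x_m$ occurring in $u$; by the antichain property this is possible only when $x_m$ already occurs in $t$, so every leaf of $u$ has a same-colour leaf of $t$ available, and $f$ can be totalised by an arbitrary same-colour choice on leaves where $H$ posed a non-matching index. Since this only enlarges $f(C)$, it cannot break $D \subseteq f(C)$, so we obtain a genuine (total) reduction. In the pointed case no keys are required, $u$ may legitimately contain variables absent from $t$, and those leaves have no image, leaving $f$ partial — precisely a $\bullet$-reduction. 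The delicate points to verify carefully are that $H$ and $K$ each read only finitely much of their inputs (so the antichain argument applies) and that the arbitrary totalisation never manufactures a spurious failure of the clique condition.
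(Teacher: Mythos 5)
Your proposal is correct and follows essentially the same route as the paper's (much terser) proof: fix a generic instance with pairwise distinct indices, read off the colour-respecting map from the indices posed by the forward witness, use the strong antichain property to force the maximal-clique condition via the backward witness, and totalise in the non-pointed case using the keys $a_{n,0}$. Your write-up is simply a more detailed elaboration of the argument the paper sketches.
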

\begin{proof}
That $t \leq u$ or $t \lept u$ respectively is generically valid is stating that there exists a Weihrauch reduction $f_t \leqW g_u$ between particular multivalued functions constructed from the terms. We fix an input to $f_t$ where different occurrences of the same variable receive different integer inputs. The valid solutions for $f$ are certain combinations of values $a_{i,j}$ from our chosen strong antichain. Following the construction of $\interpG{t}$, we see that the sufficient combinations of $a_{i,j}$ correspond to the maximal cliques in $\interpG{t}$. 

Since the reduction cannot obtain one of the $a_{i,j}$ unless the value is returned by $g_u$, we can obtain the reduction from $\interpG{t}$ to $\interpG{u}$ by observing which values provided by $g_u$ are the ones demanded by $f_t$.

If we are considering $t \lept u$, we are done. For $t \leq u$ we observe that the presence of $a_{n,0}$ in the domains means that $\interpG{u}$ cannot have any colours which are absent from $\interpG{t}$. We can thus make the partial reduction total by assigning some arbitrary vertex of the correct colour for any undefined output. 
\end{proof}

\begin{lemma}
\label{lem:combinatoriallytouniversal}
A combinatorially valid inequality $t \leq u$ or $t \lept u$ is also universally valid.
\end{lemma}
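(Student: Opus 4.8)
The plan is to show that a combinatorial reduction $f : V_{\interpG{u}} \partto V_{\interpG{t}}$ directly supplies a recipe for a Weihrauch reduction $t \leq u$ that works for \emph{every} interpretation of the variables, not just the generic one. First I would recall that any term built from $\sqcap, \times, 1$ denotes a multivalued function whose domain consists of tuples of inputs to the variable-instances, and whose set of valid solutions is determined by the maximal cliques of its graph: a clique corresponds to a set of variable-occurrences that must all be solved \emph{simultaneously}, while passing between cliques ($\sqcap$) corresponds to a \emph{choice} of which branch to serve. Concretely, I would make precise the claim implicit in the generic proof that for any interpretation, a solution to $\interpG{t}$ is obtained precisely by choosing one maximal clique $C$ of $\interpG{t}$ and supplying, for each vertex $w \in C$, a valid solution of the corresponding variable-instance $c(w)$ on the provided input.

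Given this, the forward map $H$ of the reduction would be defined as follows: the input to $t$ specifies, for each vertex of $\interpG{t}$, an input to its variable; I would route the input of each vertex $f(v) \in V_{\interpG{t}}$ to the vertex $v \in V_{\interpG{u}}$ carrying the same colour (well-defined since $f$ respects colours and, for $t \leq u$ with $f$ total, every vertex of $\interpG{u}$ is hit — for $t \lept u$ with $f$ partial, the unused vertices of $\interpG{u}$ are pointed, so we may feed them any computable input). For the backward map $K$, a solution to the instance $u$ picks out a maximal clique $D$ of $\interpG{u}$ together with solutions for each $w \in D$. The key combinatorial guarantee is that $f(D)$ contains a maximal clique $C$ of $\interpG{t}$; I would then read off the solutions attached to the vertices of $f^{-1}(C) \cap D$ and assemble them into a solution of $t$ along the clique $C$. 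Because $f$ preserves colours, each such solution is a valid output of exactly the variable-instance required by the corresponding vertex of $C$, so the assembled tuple is a genuine solution of $\interpG{t}$.

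I expect the main obstacle to be making rigorous the correspondence between maximal cliques and valid solutions, and in particular verifying that the clique-image condition in \Cref{def:comb-valid} is exactly what is needed, rather than slightly too weak or too strong. The subtlety is that a solution of $\interpG{u}$ determines a maximal clique $D$ (a full specification of which $\sqcap$-branches were taken at every level), and we must guarantee that $f(D)$ \emph{contains} a maximal clique of $\interpG{t}$ so that the recovered branch of $t$ is fully resolved; conversely we must check that the solution components we extract are consistent, i.e.\ that distinct vertices of $C$ pull their answers from genuinely available outputs of $g_u$ rather than from branches of $u$ that were not served. I would handle this by an induction on term structure that tracks, for each subterm, the bijective-up-to-colour correspondence between the clique structure of its graph and its solution sets, using the inductive clauses $\interpG{t \sqcap u} = \interpG{t} + \interpG{u}$ (maximal cliques are inherited from one side) and $\interpG{t \times u} = (\interpG{t}^\bot + \interpG{u}^\bot)^\bot$ (maximal cliques are unions of a maximal clique on each side).

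Finally, I would note that the same construction handles both the plain and pointed cases uniformly: the only difference is whether $f$ is total, and totality is precisely what compensates for the absence of the pointedness assumption, since for $t \leq u$ we cannot fabricate inputs for variable-instances of $u$ that receive no input from $t$. This matches the colour-containment observation already used in the proof of \Cref{lem:generictocombinatorially}, so the two lemmas fit together to close the equivalence once \Cref{lem:combinatoriallytouniversal} is established.
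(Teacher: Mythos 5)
Your proposal is correct and is essentially the paper's own argument, just spelled out in much more detail: the paper's proof is a two-sentence sketch saying that the forward functional is determined by the combinatorial map and the backward functional is a choice function selecting, for each maximal clique of $\interpG{u}$, a maximal clique of $\interpG{t}$ contained in its image. Your elaboration of the clique--solution correspondence and the handling of totality versus pointedness matches what the paper leaves implicit.
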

\begin{proof}
The forward reduction is entirely determined by the combinatorial one, and the backwards strong reduction can be obtained by taking a choice function between maximal cliques corresponding to the correctness criterion for the combinatorial reduction.
\end{proof}

\begin{proof}[Proof of \Cref{{lem:combred-valid}}]
It is relatively straightforward to prove
  \[ \text{universal} \xLongrightarrow{(1)} \text{generic} \xLongrightarrow{(2)} \text{combinatorial}
  \xLongrightarrow{(3)} \text{universal}\]

$(1)$ is trivial. $(2)$ is the statement of \Cref{lem:generictocombinatorially}.  $(3)$ is the statement of \Cref{lem:combinatoriallytouniversal}.
\end{proof}

\subsubsection*{Extending the picture to strong Weihrauch reducibility}
The proof of Lemma \ref{lem:combinatoriallytouniversal} proceeded by observing that a combinatorial reduction between terms translates to a Weihrauch reduction between the Weihrauch degrees obtained by substituting some arbitrary multivalued functions for the variables. The Weihrauch reductions we get are even strong Weihrauch reductions. Thus, the results of \Cref{{lem:combred-valid}} still hold when we replace universally valid by its counterpart for strong Weihrauch reducibility. This yields:

\begin{corollary}
$(\mathfrak{W},\sqcap,\times)$ and $(\mathfrak{W}^s,\sqcap,\times)$ have the same equational theory.
\end{corollary}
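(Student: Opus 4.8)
The plan is to show that the two structures $(\mathfrak{W},\sqcap,\times)$ and $(\mathfrak{W}^s,\sqcap,\times)$ validate exactly the same inequalities, since for a signature containing $\sqcap$ an equation is equivalent to a pair of inequalities and so the equational theories coincide iff the inequational theories do. By \Cref{lem:combred-valid}, the inequalities $t \leq u$ that are universally valid (for ordinary Weihrauch reducibility) are precisely the combinatorially valid ones. Thus it suffices to establish the same characterisation for strong Weihrauch reducibility: an inequality $t \leq u$ is universally valid in $\mathfrak{W}^s$ if and only if it is combinatorially valid. Once both theories are characterised by the \emph{same} combinatorial condition on the graphs $\interpG{t}$ and $\interpG{u}$, they must coincide.

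I would prove the strong-Weihrauch characterisation by reusing the two implications already in hand. For the direction ``combinatorial $\Rightarrow$ universally valid in $\mathfrak{W}^s$'', I invoke the observation made just before the corollary: the proof of \Cref{lem:combinatoriallytouniversal} actually produces \emph{strong} Weihrauch reductions, because the backward witness is obtained from a choice function between maximal cliques and does not consult the original input. Hence a combinatorially valid inequality is universally valid for $\leqsW$. For the converse direction, ``universally valid in $\mathfrak{W}^s$ $\Rightarrow$ combinatorial'', I use that strong Weihrauch reducibility refines ordinary Weihrauch reducibility: if $t \leq u$ holds strongly for all instantiations, then in particular it holds in the ordinary sense for all instantiations, so $t \leq u$ is universally valid in $\mathfrak{W}$, and by \Cref{lem:combred-valid} it is combinatorially valid.

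Combining these two implications gives that universal validity in $\mathfrak{W}^s$ coincides with combinatorial validity, which in turn coincides with universal validity in $\mathfrak{W}$. Therefore the inequational theories of $(\mathfrak{W},\sqcap,\times)$ and $(\mathfrak{W}^s,\sqcap,\times)$ are identical, and passing from inequalities to equations (each $t = u$ being $t \leq u$ together with $u \leq t$) yields that their equational theories agree as well.

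The only point requiring genuine care — and the step I expect to be the main obstacle — is verifying precisely that the reduction constructed in \Cref{lem:combinatoriallytouniversal} is strong, i.e.\ that the backward map $K$ genuinely ignores its first argument. This hinges on the remark that the correctness criterion for a combinatorial reduction is witnessed by a choice function selecting, for each maximal clique of $\interpG{u}$, a maximal clique of $\interpG{t}$ contained in its image, and that assembling the output of $f_t$ from the outputs of $g_u$ can be done uniformly from $y$ alone. Since this fact is exactly what the paragraph preceding the corollary asserts, the remaining work is simply to record the chain of implications cleanly; no new construction is needed.
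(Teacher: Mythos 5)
Your proposal is correct and follows essentially the same route as the paper: the paper likewise observes that the reductions produced in the proof of \Cref{lem:combinatoriallytouniversal} are in fact strong, and closes the cycle using that strong universal validity implies ordinary universal validity, which by \Cref{lem:combred-valid} implies combinatorial validity. The only difference is that you spell out the bookkeeping (inequalities versus equations, and the explicit chain of implications) that the paper leaves implicit.
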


\subsubsection*{On the constant $1$}
Extending our results for inequalities $t \lept u$ built from $\sqcap,\times$ to terms also including the constant $1$ is straightforward, as our axioms for the pointed degrees allow us to simplify any term over $(\sqcap,\times,1)$ to produce either just $1$ or a term over $(\sqcap,\times)$.

For inequalities $t \leq u$, this is less straightforward. We can extend our interpretations of terms as coloured cographs to include $1$ by mapping $1$ to an uncoloured vertex. In the definition of a reduction map the uncoloured vertices are ignored other than in defining what the maximal cliques are. This means that the reduction map does not have to be defined on the uncoloured vertices, and the image of a maximal clique in  $\interpG{u}$ has to contain -- up to some uncoloured vertices -- a maximal clique in  $\interpG{t}$.

While having $1$ available to us makes it trivial to reduce universal validity of terms $t \lept u$ to universal validity of terms $t' \leq u'$, this is possible even without $1$:
\begin{proposition}
Let $t,u$ be terms built using the variables $x_0,\ldots,x_n$. The term $t \lept u$ is universally valid iff $t \times x_0 \times \ldots \times x_n \leq u \times x_0 \times \ldots \times x_n$ is universally valid.
\end{proposition}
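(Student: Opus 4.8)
The plan is to pass to the combinatorial characterisation of \Cref{lem:combred-valid} and reduce the statement to a purely graph-theoretic equivalence. Writing $G = \interpG{t}$ and $H = \interpG{u}$, that theorem tells us $t \lept u$ is universally valid iff there is a $\bullet$-reduction $f : V_H \partto V_G$, while $t \times x_0 \times \cdots \times x_n \leq u \times x_0 \times \cdots \times x_n$ is universally valid iff there is a total reduction between the corresponding cographs. First I would compute these cographs: since $\interpG{t \times s} = (\interpG{t}^\bot + \interpG{s}^\bot)^\bot$ is the graph join of $\interpG{t}$ and $\interpG{s}$, and $\interpG{x_0 \times \cdots \times x_n}$ is a clique $W = \{w_0, \ldots, w_n\}$ with $w_i$ coloured $x_i$, the graph $G' := \interpG{t \times x_0 \times \cdots \times x_n}$ is the join of $G$ with $W$, whose maximal cliques are exactly the sets $C \cup W$ for $C$ a maximal clique of $G$. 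Likewise $H' := \interpG{u \times x_0 \times \cdots \times x_n}$ is the join of $H$ with a clique $W' = \{w'_0, \ldots, w'_n\}$, with maximal cliques $D \cup W'$ for $D$ maximal in $H$. Thus it suffices to prove that a $\bullet$-reduction $H \to G$ exists iff a total reduction $H' \to G'$ exists. Throughout I use that every colour occurring in $G$ or $H$ lies in $\{x_0, \ldots, x_n\}$, so that $W$ and $W'$ contain a vertex of every relevant colour.

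For the forward direction, given a $\bullet$-reduction $f : V_H \partto V_G$ I would define a total reduction $g : V_{H'} \to V_{G'}$ by setting $g(w'_i) = w_i$, letting $g = f$ wherever $f$ is defined, and sending any $v \in V_H$ on which $f$ is undefined to the vertex $w_i \in W$ matching its colour $x_i$. This is total and colour-respecting, and for a maximal clique $D \cup W'$ of $H'$ one has $g(D \cup W') = g(D) \cup W \supseteq f(D) \cup W \supseteq C \cup W$, where $C$ is a maximal clique of $G$ contained in $f(D)$; hence $g$ is a reduction.

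The main obstacle is the converse. The naive attempt --- restrict a total reduction $g : V_{H'} \to V_{G'}$ to $V_H$ and declare it undefined wherever it lands in $W$ --- fails, because the maximal clique $C$ of $G$ witnessed inside $g(D \cup W')$ may be covered partly through $g(W')$ rather than through $g(D)$, so the restriction need not hit a full maximal clique of $G$. The fix I would use is a global redirection exploiting the parallel roles of $W$ and $W'$: define $f(v) = g(v)$ when $g(v) \in V_G$, and when $g(v) = w_i \in W$ set $f(v) = g(w'_i)$ if $g(w'_i) \in V_G$ and leave $f(v)$ undefined otherwise. Since $g(w'_i)$ has the same colour $x_i$ as $v$, this is colour-respecting and single-valued. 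To verify the clique condition for a maximal clique $D$ of $H$, I would fix the maximal clique $C$ of $G$ with $C \cup W \subseteq g(D \cup W')$ and show $C \subseteq f(D)$: for $c \in C$, either $c \in g(D)$, in which case $c \in f(D)$ directly; or $c \notin g(D)$, forcing $c = g(w'_i)$ for the index $i$ matching the colour of $c$, whereupon $w_i \notin g(W')$, so $w_i$ must be covered by some $v \in D$ with $g(v) = w_i$, and then by construction $f(v) = g(w'_i) = c$, so again $c \in f(D)$. This yields the required $\bullet$-reduction and completes the equivalence. The only points needing care are the well-definedness and colour-correctness of the redirection and the case analysis establishing $C \subseteq f(D)$; the cograph computations and the forward direction are routine.
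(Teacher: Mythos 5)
Your proof is correct and follows essentially the same route as the paper's: both directions go through the combinatorial characterisation, the forward extension sends clique vertices to clique vertices and undefined vertices to the matching-colour clique vertex, and the converse uses the same redirection $v \mapsto g(w'_i)$ when $g(v)$ lands in the auxiliary clique (the paper phrases the final clique-covering verification as a proof by contradiction rather than your direct case analysis, but the content is identical).
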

\begin{proof}
Write $(V_t, E_t, c_t)$ and $(V_u, E_u, c_u)$ for $\interpG{t}$ and $\interpG{u}$ respectively,
and assume that we have that the vertices of $\interpG{x_0 \times \ldots \times x_n}$ being
$\{0, \ldots, n\}$ and disjoint from $V_t$ and $V_u$.
For both directions, we use the combinatorial validity of the reductions, which we know to be
equivalent to universal validity.

If we have a pointed combinatorial reduction $t \lept u$ witnessed by a partial map $f : V_u \partto V_t$,
then it can be extended to a non-pointed one witnessed by $\hat{f} : V_u \cup \{0, \ldots, n\} \to V_t \cup \{0, \ldots, n\}$ extending $f$ by setting $\hat{f}(i) = i$ for $0 \le i \le n$ and
$\hat{f}(v) = i$ when $v \in V_u \setminus \dom(f)$ and $c_t(v) = x_i$. It is easily checked to map maximal cliques to sets containing maximal cliques.

Conversely, if we have a combinatorial reduction $t \times x_0 \times \ldots \times x_n \leq u \times x_0 \times \ldots \times x_n$ witnessed by $g : V_u \cup \{0, \ldots, n\} \to V_t \cup \{0, \ldots, n\}$,
we may massage it into a partial map $\hat{g} : V_u \partto V_t$ witnessing a combinatorial reduction
$t \lept u$. $\hat{g}$ is defined as follows for $v \in V_u$:
\begin{itemize}
\item If $g(v) \in V_t$, we take $\hat{g}(v) = g(v)$
\item If $g(v) = i \in \{0, \ldots, n\}$ and $g(i) \in V_t$, we take $\hat{g}(v) = g(i)$
\item Otherwise $v \notin \dom(\hat{g})$
\end{itemize}
If we have a maximal clique $M$ in $\interpG{u}$, then $M \cup \{0, \ldots, n\}$ is a maximal clique
in $\interpG{u \times x_0 \times \ldots \times x_n}$ and thus gets mapped by $g$ to a set $S$ containing a maximal clique in $\interpG{t \times x_0 \times \ldots \times x_n}$ by $g$. Writing $S'$ for $S \cap V_t$, we clearly have that $S = S' \cup \{0, \ldots, n\}$ and that $S'$ contains a maximal clique of $\interpG{t}$. But then we also have $\hat{g}$ maps $M$ to $S'$: let us prove this by contradiction.
If an element $e$ of $S'$ is not in $\hat{g}(M)$, it means that there is no $v \in V_u$ such that $g(v) = e$ but there is some $i \in \{0, \ldots, n\}$ such that $g(i) = e$ by the first clause in the definition of $\hat{g}$. But since $i \in S$, there must be some $v'$ in $M$ such that $g(v') = i$. But then that would mean that $\hat{g}(v') = e$ by the second clause, a contradiction.
\end{proof}

\section{Completeness does not depend on pointedness}

\begin{theorem}
\label{thm:completeness-pointed-nonpointed-equiv}
The axioms of Figure~\ref{fig:axiom-base}
are complete for $(\Wei, \sqcap, \times, 1)$
if and only if they are complete
for $(\ptWei, \sqcap, \times, 1)$ when
adding $1 \lept a$
\end{theorem}

To prove this theorem, we first establish some
results for derivability that allows us to relate
the pointed and non-pointed theories. For any finite set
$X$ of variables, call $i_X$ the formula
$1 \sqcap \bigsqcap_{x \in X} x$.
The intuition is that the corresponding Weihrauch
problem requires at least an input for every $x$ to
be provided, but does not answer any of the corresponding
questions.

\begin{lemma}
\label{lem:contr-i-X}
For any finite sets of variables $X$ and $Y$,
we can derive $i_X \times i_Y = i_{X \cup Y}$.
\end{lemma}
\begin{proof}
Firstly we can show that we have $i_X \times i_X = i_X$.
This is because we have the relevance axiom
and $i_X \times i_X \le i_X \times 1 = i_X$,
since $i_X \le 1$ by definition.
Then, note that
$i_{X'} \le i_X$ whenever $X \subseteq X'$.
So we have $i_{X \cup Y} = i_{X \cup Y} \times i_{X \cup Y}
\le i_X \times i_Y$.
For the converse direction, it suffices to show
that $i_X \times i_Y \le 1$, $i_X \times i_Y \le x$ for
all $x \in X$ and, symmetrically, $i_X \times i_Y \le y$ for
all $y \in Y$. This is straightforward as $i_X \le 1$
and $i_X \le x$ for every $x \in X$.
\end{proof}

\begin{lemma}
\label{lem:der-tm-pt-nonpt}
For any term $t$ whose set of free variables is
$X$, we can derive $i_X \times t = t$.
\end{lemma}
\begin{proof}
Deriving $t \le i_X \times t$ is easy since $1 \le i_X$.
For the converse, we proceed by induction over $t$.
If $t = 1$, this amounts to $1 \times 1 = 1$.
If $t = t_1 \times t_2$, call $X_1$ and $X_2$ the
free variables of $t_1$ and $t_2$ respectively.
In this case we have $X = X_1 \cup X_2$,
so we can derive $i_X \times t \le i_{X_1} \times t_1 \times i_{X_2} \times
t_2$ using Lemma~\ref{lem:contr-i-X} and then conclude using
the inductive hypotheses and the monotonicity axiom for $\times$.
If we have $t = t_1 \sqcap t_2$, adopting similar naming
conventions, we also have $i_X \times t \le (i_{X_1} \times t_1) \sqcap (i_{X_2} \times t_2)$ using the fact that $i_X \le i_Y$ for any $Y \subseteq X$, and we similarly using monotonicity of
$\sqcap$.
\end{proof}

\begin{lemma}
\label{lem:der-pt-nonpt}
For any terms $t$ and $u$ such that all the free
variables of $t$ and $u$ are contained in a finite set $X$,
then we have that $i_X \times t \le u$ is
derivable if and only if $t \le_\bullet u$
is derivable.
\end{lemma}
\begin{proof}
The left-to-right direction is straightforward,
as we have $1 \le_\bullet x$ for all
$x \in X$ and therefore $1 \le_\bullet i_X$.
For the right-to-left direction, we need to reason
by induction on the size of a derivation $t \le_\bullet u$ and
make a case analysis on the last axiom applied. 
\begin{itemize}
\item If the last axiom applied is reflexivity and $t = u$,
we can simply apply Lemma~\ref{lem:der-tm-pt-nonpt} since
$X$ is a superset of the free variables of $t$.
\item If the last axiom applied is transitivity, i.e.
we have two subderivations of $t \le_\bullet v$
and $v \le_\bullet u$ for some $v$, we can assume,
without loss of generality, that all variables
occurring in $v$ occur in $X$.
Because if we had such a variable $a$ in $v$, we could
get derivations of the same size of $t = t[1/a] \le_\bullet v[1/a]$
and $v[1/a] \le_\bullet u[1/a] = u$.
Therefore, we can apply the induction hypothesis to
get derivations of $i_X \times u \le v$ and
$i_X \times v \le u$ which we can chain as follows
\[i_X \times t \le i_X \times (i_X \times t) \le i_X \times v \le u\]
\item If the last axiom applied is monotonicity of of $\times$,
i.e. we have derivations $t_1 \le_\bullet u_1$, $t_2 \le_\bullet u_2$
and $t = t_1 \times t_2$ and $u = u_1 \times u_2$, then we can
apply the inductive hypotheses to obtain
$i_X \times t_j \le u_j$ for $j \in \{1,2\}$ and
conclude by
\[i_X \times t_1 \times t_2 \le i_X \times t_1 \times i_X \times t_2 \le u_1 \times u_2\]
\item Monotonicity of $\sqcap$ is handled similarly.
\item All of the other axioms shared between the pointed
and non-pointed cases are handled trivially. So only
$1 \le_\bullet t$ remains, for which it suffices to prove
that $i_X \le t$. This is done via an easy induction over $t$.
\end{itemize}
\end{proof}

\begin{proof}[Proof of Theorem~\ref{thm:completeness-pointed-nonpointed-equiv}]
First let us assume that our theory is complete for $\ptWei$
and that we have that $t \le u$ combinatorially valid in $\Wei$.
In particular, our theory derives $t \lept u$. Via
combinatorial validity, we also know that
means that the free variables of $u$ are also free variables
of $t$; call $X$ the set of free variables of $t$. By
Lemma~\ref{lem:der-pt-nonpt}, we have that our theory
derives $i_X \times t \le u$, but we also have by
Lemma~\ref{lem:der-tm-pt-nonpt} that $t \le i_X \times t$,
so we can conclude that the theory is also complete for $\Wei$.

Conversely, if the theory is complete for $\Wei$ and
$t \le_\bullet u$ is combinatorially valid in $\ptWei$,
then, if $X$ is the set of free variables of $u$,
$i_X \times t \le u$ is easily seen to be combinatorially
valid; the partial map given by the pointed reduction
can be made total by mapping unmapped vertices to
those in $\interpG{i_X}$. Then, by completeness
we have $i_X \times t \le u$ derivable, and
since $1 \lept i_X$,
we a fortiori have $t = 1 \times t \lept i_X \times t \le u$
derivable.
\end{proof}

\section{Introducing joins and parallelization}

\begin{figure}

    \begin{center}

    Joins
    
    \begin{mdframed}
    \[\begin{array}{cr}
    a \le a \sqcup b \qquad b \le a \sqcup b  & \text{$\sqcup$ is an upper bound} \\
    b \le a \; \wedge \; c \le a ~ \Rightarrow ~ b \sqcup c \le a & \text{$\sqcup$ is the least ub} \\
    a \sqcap (b \sqcup c) = a \sqcap b \sqcup a \sqcap c  & \text{distributivity of $\sqcap$} \\
    a \times (b \sqcup c) = a \times b \sqcup a \times c  & \text{distributivity of $\times$} \\
    \end{array}\]
    
    \end{mdframed}
    
    Parallelization
    \begin{mdframed}
    \[\begin{array}{cr}
    a \le a^* \qquad (a^*)^* \le a^* & \text{closure operator}\\
    a^* \times a^* \le a^* & \text{duplicable}\\
    (a \sqcup b)^* = a^* \times b^* & \text{interaction with $\sqcup$}\\
    (a \sqcap b)^* = a^* \sqcap b^* & \text{interaction with $\sqcap$}\\
    (a \times b)^* = 1 \sqcup  a \times a^* \times b \times b^* & \text{interaction with $\times$}\\
    1^* = 1 & \text{interaction with $1$}\\
    \end{array}
    \]
    \end{mdframed}
    \end{center}
    \caption{Axioms for $\sqcup$ and $(-)^*$}
    \label{fig:axiom-ext}
    \end{figure}

We extend our axiomatization of $(\Wei, \sqcap, \times, 1)$ to an axiomatization of $(\Wei, \sqcap, \times, 1, \sqcup, (-)^*)$ as 
shown in Figure \ref{fig:axiom-ext}. 
This list of axioms is not free of redundancy. For example, duplicability of $(-)^*$ follows from $a = a \sqcup a$ and the interaction of $(-)^*$ with $\sqcup$.
We will show that a complete axiomatization of $(\ptWei, \sqcap, \times, 1)$ yields a complete axiomatization of
$(\ptWei, \sqcap, \times, 1, \sqcup, (-)^*)$.

Observe that the distributivity rules for $\sqcap$, $\times$, and $(-)^*$ in Figure \ref{fig:axiom-ext} allow us to rewrite every term $t$ as a join 
$t = \bigsqcup_i t_i$
where the $t_i$s are $\sqcup$-free terms such that parallelization is only applied to variables.
Distributing $\sqcap$ and $\times$ over $\sqcup$ can increase the size of the term exponentially.

However, we can re-write any term in linear time into one where parallelization is applied only to variables.
This is obvious in the pointed case, since the rule 
$(a \times b)^* = 1 \sqcup  a \times a^* \times b \times b^*$
together with pointedness yields 
$(a \times b)^* = a^* \times b^*$.
In general, an easy induction shows that any term of the form
    $t = \left(\bigtimes_{i = 1}^n t_i\right)^*$
is derivably equivalent to 
    $1 \sqcup \bigtimes_{i = 1}^n \left(t_i \times t_i^*\right)$
via a linear-size derivation.

First, we extend the combinatorial criterion to terms with the additional connectives. 
With our axioms, we can without loss of generality assume that parallelizations are only applied to variables; terms that do not satisfy this can be rewritten to terms that do so in linear time to
terms that are provably equivalent. We make this assumption below when defining the notion of combinatorial reduction for those terms.

A (coloured undirected) graph \emph{with parallelization information} is a tuple $(V,E,c,p)$ such that $(V,E,c)$ is a graph
as before and $p : V \to \{0,1\}$ is an additional colour information, meant to convey whether a particular vertex
$v$ denotes the variable $c(v)$ or the term $c(v)^*$. There is an obvious extension of $\interpG{-}$ that converts $\sqcup$-free
terms to graphs with parallelization information.

For a term $t$ over $(\ptWei, \sqcap, \times, \sqcup, (-)^*)$, we define the set $\slice(t)$ of \emph{slices of $t$} by induction as follows:
\begin{itemize}
\item If $t$ is $\sqcup$-free, then $\slice(t) = \{t\}$.
\item $\slice(t \sqcup u) = \slice(t) \cup \slice(u)$.
\item For $\otimes \in \{\sqcap, \times\}$, let 
\[\slice(t \otimes u) = \slice(t) \otimes \slice(u) := \left\{ t' \otimes u' \mid t' \in \slice(t), u' \in \slice(u)\right\}.\]
\end{itemize}

Any term $t' \in \slice(t)$ is $\sqcup$-free and can hence be associated with a graph with parallelization information $\interpG{t'}$.

\begin{definition}\label{def:comb-valid-ext}
    Let $t$ and $u$ be terms over $(\ptWei, \sqcap, \times, \sqcup, (-)^*)$.
A \emph{combinatorial reduction from} $t$ \emph{to} $u$ is given by
a map $s \colon \slice(t) \to \slice(u)$ and a family of relations 
$f_{t'} \subseteq V_{s(t')} \times V_{t'}$ 
indexed over the slices $t'$ of $t$, such that the following conditions are satisfied:
    \begin{enumerate}
        \item 
        For all $t' \in \slice(t)$, 
            if $p_{s(t')}(x) = 0$ then $(x,y), (x, y') \in f_{t'}$ implies $y = y'$ ($f_{t'}$ is single-valued on non-parallelized variables).
        \item 
        For all $t' \in \slice(t)$,
            if $(x, y) \in f_{t'}$ then $c_{t'}(y) = c_{s(t')}(x)$ ($f_{t'}$ respects colours).
        \item 
        For all $t' \in \slice(t)$,
            if $(x, y) \in f_{t'}$ then $p_{t'}(y) \leq p_{s(t')}(x)$ ($f_{t'}$ respects parallelization information).
        \item For all 
        $t' \in \slice(t)$ 
        and all maximal cliques 
        $M$
        in $\interpG{s(t')}$
        the image of $M$ under $f_{t'}$ 
        contains a maximal clique of $t'$.
    \end{enumerate}

We say that a pointed inequality $t \lept u$ is \emph{combinatorially valid} if there exists a corresponding combinatorial reduction between the two terms.
\end{definition}

\begin{lemma}\label{lem: combinatorial and universal validity for terms with join and *}
Combinatorial and universal validity coincide for terms over $(\ptWei, \sqcap, \sqcup, \times, (-)^*)$.
\end{lemma}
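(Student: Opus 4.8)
The plan is to mirror the three-step chain universal $\Longrightarrow$ generic $\Longrightarrow$ combinatorial $\Longrightarrow$ universal that proved \Cref{lem:combred-valid} for the $(\sqcap,\times)$-fragment, adding bookkeeping for the two new features. Since the distributivity axioms of \Cref{fig:axiom-ext} let us push $\sqcup$ to the top of a term and confine $(-)^*$ to variables, and since these rewrites are sound, I would begin by assuming $t=\bigsqcup_{t'\in\slice(t)}t'$ and $u=\bigsqcup_{u'\in\slice(u)}u'$ with each slice $\sqcup$-free and all parallelizations applied to variables. Universal validity is preserved by soundness, and combinatorial validity is already defined slice-by-slice, so nothing is lost.

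The first reduction replaces the single inequality by a family of inequalities between slices. Because $\sqcup$ is a least upper bound, $t\lept u$ is universally valid iff $t'\lept u$ is universally valid for each slice $t'$, and this holds instantiation-by-instantiation. The crucial structural fact about the join appears on the right: a reduction of the $\sqcup$-free problem $t'$ into $\bigsqcup_{u'}u'$ must commit to a single summand, since the forward functional produces an instance of the join carrying a tag that selects one slice $u'$, and every subsequent query is directed at that $u'$. Fixing the canonical generic input of $t'$ thus yields a well-defined slice map $s\colon\slice(t)\to\slice(u)$ and reduces the problem to universal validity of the $\sqcup$-free inequalities $t'\lept s(t')$, each a statement about graphs with parallelization information.

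For these $\sqcup$-free inequalities I would re-run the generic instantiation of \Cref{lem:generictocombinatorially}, instantiating the $n$-th variable by $k\mapsto a_{n,k}$ and reading $x_n^*$ as its finite parallelization, so that a parallelized occurrence accepts an arbitrary finite list of inputs and returns the matching list of strong-antichain values while a non-parallelized occurrence returns a single value for a single input. As before, the maximal cliques of $\interpG{t'}$ and $\interpG{s(t')}$ index the minimal bundles of antichain values that solve each problem. Tracing which occurrence of $s(t')$ must supply the value demanded by each vertex in a targeted solution of $t'$ produces the relation $f_{t'}$, and the local conditions fall out: colours must agree because the antichain separates variables (condition 2); a non-parallelized resource supplies one value and can therefore serve at most one demand (condition 1); and a parallelized demand, fed a list of length at least two in the canonical input, needs more values than any single resource supplies, forcing $p_{t'}(y)\le p_{s(t')}(x)$ (condition 3). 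Condition 4 is the routing statement inherited from the $(\sqcap,\times)$-case: the reduction must succeed against every solution $s(t')$ may return, that is, against every maximal clique.

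Conversely, I would build from $s$ and $(f_{t'})$ an explicit strong reduction for an arbitrary instantiation, generalizing \Cref{lem:combinatoriallytouniversal}: the forward map selects the summand $s(t')$ and uses $f_{t'}$ backwards to distribute the given instance of $t'$ across the occurrences of $s(t')$, duplicating inputs onto a parallelized occurrence exactly as the multi-valuedness permitted by condition 1 requires, while the backward map invokes condition 4 to extract a maximal clique of $t'$ from whatever clique $s(t')$ realizes and assembles the corresponding solution. The step I expect to be hardest is the parallelization bookkeeping inside the generic-to-combinatorial direction: one must check that the strong-antichain property still forbids fabricating values even though a single parallelized occurrence now legitimately yields many of them, so that single-valuedness and $p$-monotonicity are genuinely forced rather than merely sufficient, and one must reconcile the adversary's freedom to choose list-lengths on the $t'$ side with the reduction's freedom to choose them on the $s(t')$ side.
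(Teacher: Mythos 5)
Your proposal follows essentially the same route as the paper: the same chain universal $\Rightarrow$ generic $\Rightarrow$ combinatorial $\Rightarrow$ universal, the same slice decomposition with a slice map $s$ forced by the forward functional committing to one summand of the join, and the same use of the strong antichain to read off the relations $f_{t'}$ and their four conditions. The one point where your argument as written would fail is the forcing of condition 3: querying a parallelized occurrence with a list of length merely two does not rule out its demand being met jointly by two distinct non-parallelized occurrences of the same variable in $u$ (e.g.\ $a^* \lept a \times a$ at that list length); the paper instead queries each parallelized occurrence for strictly more inputs than the total number of occurrences of variables in $u$, which is the quantitative bound you need at the step you yourself flag as hardest.
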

\begin{proof}
    The structure of the proof is identical to that of the proof of Lemma \ref{lem:combred-valid}. 
        
        We introduce generic validity of formulas just as we did for formulas over $(\ptWei, \sqcap, \times)$.
        It is clear that universal validity implies generic validity.

        To show that that generic validity implies combinatorial validity one 
        starts with a Weihrauch reduction $g_t \leq_W g_u$ where 
        $g_t$ and $g_u$ are ``generic interpretations'' of terms $t$ and $u$ 
        respectively.
        By a straightforward induction on the definitions of $\times$, $\sqcap$, $\sqcup$, and $\slice$,
        one obtains that the forward Weihrauch reduction yields a combinatorial reduction from $t$ to $u$,
        whose correctness is witnessed by the backwards reduction.
        To prove the latter part, 
        one queries in $g_t$ the generic Weihrauch degrees associated with different occurrences of non-parallelized variables for pairwise distinct integers.
        The parallelized degrees associated with occurrences of parallelized variables are queried for pairwise distinct integers that are also pairwise distinct from all the queries to non-parallelized variables.
        Further, one ensures that every parallelized degree associated with an occurrence of a parallelized variable is queried for a number of inputs that is strictly larger than the number of occurrences of variables in $u$.
    
        Finally, it is fairly straightforward to construct a Weihrauch reduction from a combinatorial reduction, analogously to Lemma \ref{lem:combred-valid}.
\end{proof}

Lemma \ref{lem: combinatorial and universal validity for terms with join and *} 
allows us to extend a complete axiomatization for 
$(\ptWei, \sqcap, \times, 1)$
to a complete axiomatization for 
$(\ptWei, \sqcap, \times, 1, \sqcup, (-)^*)$
by adding the axioms in Figure \ref{fig:axiom-ext}:

\begin{proposition}\label{Proposition: completeness from completeness}
    Let $\Gamma$ be a complete set of axioms for $(\ptWei, \sqcap, \times, 1)$.
    Let $\Delta$ denote the set of axioms from Figure \ref{fig:axiom-ext}.
    Then $\Gamma \cup \Delta$ is a complete set of axioms for
    $(\ptWei, \sqcap, \times, 1, \sqcup, (-)^*)$.
\end{proposition}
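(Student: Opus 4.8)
The plan is to leverage the combinatorial characterisation of \Cref{lem: combinatorial and universal validity for terms with join and *} together with the normal-form rewriting afforded by $\Delta$. Suppose $t \lept u$ is universally valid; I want to show it is derivable from $\Gamma \cup \Delta$. Using the distributivity axioms together with the axioms governing the interaction of $(-)^*$ with $\sqcup, \sqcap, \times$ and $1$, both sides rewrite (provably in $\Delta$) into their slice normal forms $t \eqpt \bigsqcup_{t' \in \slice(t)} t'$ and $u \eqpt \bigsqcup_{u' \in \slice(u)} u'$, where each slice is $\sqcup$-free with parallelisation applied only to variables. Since $\sqcup$ is a least upper bound in $\Delta$, deriving $t \lept u$ reduces to deriving $t' \lept u$ for every $t' \in \slice(t)$; and since $s(t') \in \slice(u)$ is an upper-bound component of $u$, the axiom making $\sqcup$ an upper bound gives $s(t') \lept u$, so it further suffices to derive the single slice inequality $t' \lept s(t')$ for the map $s$ supplied by the combinatorial reduction.

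The heart of the argument is therefore to turn the combinatorial data $f_{t'} \subseteq V_{s(t')} \times V_{t'}$ of \Cref{def:comb-valid-ext} into a derivation of $t' \lept s(t')$. I would factor this through an intermediate term $w$ obtained from $s(t')$ by $\times$-expanding each parallelised vertex $x$ into as many copies of $c(x)^*$ as it has images under $f_{t'}$ (deleting $x$ when it has none), leaving non-parallelised vertices untouched. Then three separate derivations combine: (i) $w \lept s(t')$ follows purely in $\Delta$, contracting the copies via duplicability $a^* \times a^* \lept a^*$ and discarding any unused parallelised factor via pointedness $1 \lept c(x)^*$ and $a \times 1 = a$; (ii) writing $t'^\uparrow$ for $t'$ with every non-parallelised vertex that is served by a parallelised vertex upgraded from $c(y)$ to $c(y)^*$, one has $t' \lept t'^\uparrow$ by $a \lept a^*$ and monotonicity; and (iii) by the expansion, the relation $f_{t'}$ becomes a single-valued partial function $V_w \partto V_{t'^\uparrow}$ that respects colours exactly once we treat each $c(\cdot)^*$ as a fresh variable (here condition~3 of \Cref{def:comb-valid-ext} guarantees that a parallelised target is never served by a non-parallelised source, so the colour classes match). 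Chaining $t' \lept t'^\uparrow \lept w \lept s(t')$ then closes the slice.

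The step needing the most care, and which I expect to be the main obstacle, is (iii): I must verify that this single-valued map is a $\bullet$-reduction in the sense of \Cref{def:comb-valid}, i.e.\ that the image of every maximal clique of $\interpG{w}$ contains a maximal clique of $\interpG{t'^\uparrow}$. This is where the cograph bookkeeping matters: $\times$-expanding a leaf replaces the corresponding vertex by a clique whose members share $x$'s external adjacencies, so the maximal cliques of $\interpG{w}$ are precisely those of $\interpG{s(t')}$ with each expanded $x$ replaced by all of its copies, and the image of such a clique under the single-valued map equals the image of the original clique under $f_{t'}$; the maximal-clique condition of \Cref{def:comb-valid-ext} then yields the required clique, cliques of $\interpG{t'^\uparrow}$ and $\interpG{t'}$ coinciding since they are colour-independent. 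Once this is checked, $t'^\uparrow \lept w$ is combinatorially---hence, by \Cref{lem:combred-valid} and completeness of $\Gamma$, universally---valid over the enlarged variable set, so it is derivable from $\Gamma$; substituting each fresh variable back by $c(\cdot)^*$ preserves derivability, as the inference rules and axiom schemas are closed under substitution, yielding a derivation of $t'^\uparrow \lept w$ in $\Gamma \cup \Delta$ and completing the proof.
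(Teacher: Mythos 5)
Your proof follows essentially the same route as the paper's: normalise both sides into joins of slices via the $\Delta$-distributivity axioms, reduce to a single slice inequality using the least-upper-bound axiom and the slice map $s$, and then eliminate parallelisation by $\times$-expanding multi-image starred vertices (justified by duplicability and relevance) and renaming each $a^*$ to a fresh variable so that completeness of $\Gamma$ applies, substituting back afterwards; the only cosmetic difference is that you upgrade the target term to $t'^\uparrow$ via $a \lept a^*$ where the paper instead downgrades some starred occurrences in the source term. The one piece you leave untreated is the constant $1$ (terms derivably equivalent to $1$), which the paper dispatches in a short final case analysis.
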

\begin{proof}
    We proceed in stages, starting from the complete axiomatization
    of $(\ptWei, \sqcap, \times)$ and improving to a complete axiomatization
    of $(\ptWei, \sqcap, \times, 1, \sqcup, (-)^*)$ by claiming that $\Gamma \cup \Delta$
    is a complete axiomatization of the substructures obtained by adding the
    connectives $(-)^*$, $\sqcup$ and $1$ successively.
    \subparagraph*{Adding parallelization:} Assume we have $t \lept u$ universally
      valid in $(\ptWei, \sqcap, \times, (-)^*)$. Without loss of generality,
      we can assume that $(-)^*$ is only applied to variables in $t$ and $u$.
    Then, by Lemma \ref{lem: combinatorial and universal validity for terms with join and *}, there exists a combinatorial reduction from $t$ to $u$.
    Since both terms are $\sqcup$-free, this reduction is given by a single relation 
    $f_t \subseteq V_u \times V_t$ 
    as in Definition \ref{def:comb-valid-ext}.
    Consider those vertices $x \in V_u$ that are related by $f_t$ to a set of vertices $\{y_1,\dots,y_m\}$ in $t$ with $m > 1$.
    Such vertices necessarily correspond to subterms of $u$ of the form $a^*$, where $a$ is a variable.
    Construct a new graph $G'$ from $\interpG{u}$ by replacing all such vertices $x$ in $\interpG{u}$ by a clique of $m$ vertices $x_1,\dots, x_m$, where $m$ is the size of the image of $\{x\}$ under $f_t$, all coloured with the same colour as $x$ and with $p(x_i) = 1$ for all $i$.
    We have $G' = \interpG{u'}$, where $u'$ is obtained from $u$ by replacing certain subterms of the form $a^*$ with subterms of the form $a^* \times \dots \times a^*$.
    The equality $u \eqpt u'$ is derivable in $\Gamma \cup \Delta$ by employing relevance of $\times$ and the duplication axiom from Figure \ref{fig:axiom-ext}.
    By construction, the combinatorial reduction $f_t$ from $t$ to $u$ can be made into a single-valued combinatorial reduction $g$ from $t$ to $u'$.
    Fix a mapping $P \colon \mathcal{V} \to \mathcal{V}$ between variables, sending the variables that occur in $u$ or $t$ to fresh distinct variables, not occurring in either term.
    Let $\hat{t}$ denote the term which is obtained from $t$ by replacing all subterms of $t$ of the form $a^*$ for a variable $a$ with the variable $P(a)$.
    Let $\hat{u}$ denote the term which is obtained from $u'$ by replacing all subterms of $u'$ of the form $a^*$ for a variable $a$ as follows: 
    If the vertex in $\interpG{u'}$ corresponding to the subterm gets mapped by $g$ to a vertex which corresponds to a term of the form $a^*$, replace the subterm with the variable $P(a)$.
    If the vertex $\interpG{u'}$ corresponding to the subterm gets mapped by $g$ to a vertex which corresponds to a term of the form $a$, replace the term with the variable $a$.
    By construction, the terms $\hat{u}$ and $\hat{t}$ belong to the language of the fragment $(\ptWei, \sqcap, \times, 1)$,
    and we have a combinatorial reduction from $\hat{t}$ to $\hat{u}$.
    By assumption, this yields a derivation over $\Gamma$ of the inequality $\hat{t} \lept \hat{u}$.
    By reversing the substitution we have performed everywhere in the derivation, we obtain a derivation of the inequality
    $t \lept u'$. Since $u' \eqpt u$ is derivable, so is $t \lept u$ as required.

  \subparagraph*{Adding joins:}
    Now, consider the full fragment $(\ptWei, \sqcap, \times, \sqcup, (-)^*)$.
    Suppose that an inequality $t \lept u$ in this fragment is universally valid.
    Then by the above considerations we obtain a derivably equivalent formula of the form
    $\bigsqcup_i t_i \lept \bigsqcup_j u_j$
    where the $t_i$s and $u_j$s are $\sqcup$-free and finite parallelizations are applied only to variables.
    By the axiom stating that $\sqcup$ is the least upper bound, this further reduces to deriving 
    the universally valid formula
    $t_i \lept \bigsqcup_j u_j$ 
    for all $i$.

    Now, observe that 
    $t \lept u \sqcup v$
    is universally valid for a $\sqcup$-free term $t$ as above if and only if 
    $t \lept u$ or $t \lept v$ is universally valid.
    Indeed, by Lemma \ref{lem: combinatorial and universal validity for terms with join and *},
    universal validity of an inequality is equivalent to the existence of a combinatorial reduction.
    By definition, a combinatorial reduction from $t$ to $u \sqcup v$ for $\sqcup$-free $t$ immediately yields a combinatorial reduction from $t$ to $u$ or from $t$ to $v$, since we have
    $\slice(t) = \{t\}$
    and
    $\slice(u \sqcup v) = \slice(u) \cup \slice(v)$

    Hence, from the universal validity of 
    $t_i \lept \bigsqcup_j u_j$ 
    we obtain the universal validity of 
    $t_i \lept u_j$ for some $j$, which is over $(\ptWei, \sqcap, \times, (-)^*)$
    and thus derivable from $\Gamma \cup \Delta$.
    
    \subparagraph*{Adding $1$:}
    Given a term $t$ over the full signature, it is easy to check whether it is
    equivalent to $1$ and then prove from $\Gamma \cup \Delta$ that it is
    equivalent to it if need be, or to derive $t \eqpt t'$ with $t'$ free of $1$.
    Then if we are to prove a universally valid inequality $t \lept u$ we have three cases:
    \begin{itemize}
     \item both $t$ and $u$ are equivalent to $1$-free terms hence we derive $t \eqpt t' \lept u' \eqpt u$ using our previous point
     \item if $t \eqpt 1$, then we can then derive $t \eqpt 1 \lept u$ where the latter is derived by recurring over $u$
     \item if $u \eqpt 1$, then it is also the case that $t \eqpt 1$ since $1$ is the bottom element of $\ptWei$, and
       we can derive $u \eqpt 1 \eqpt t$.
    \end{itemize}

\end{proof}

Finally, we observe that we can extend our notion of combinatorial validity to non-pointed terms over $(\Wei, \sqcap, \times, (-)^*,1)$ by requiring that the relations $f_{t'}$ be left-total and handling $1$ as indicated at the end of Section \ref{section:combinatorial}.
Using this, we obtain an analogous result to \Cref{Proposition: completeness from completeness} for the non-pointed case.
The proof is almost identical, so we will not spell it out in detail.

\section{Complexity of deciding validity}

We turn to the problem of determining the validity of an inequality over $(\ptWei, \sqcap, \times, 1, (-)^*)$ or 
$(\ptWei, \sqcap, \times, 1, \sqcup, (-)^*)$.
We first consider the $\sqcup$-free case:

\begin{theorem}\label{thm: join-free case is Sigma2p complete}
The problem ``is the inequality $t \lept u$ valid?'' in the structure $(\ptWei, \sqcap, \times)$ is $\Sigma^p_2$-complete.
\end{theorem}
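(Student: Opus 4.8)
The plan is to replace the semantic question by the combinatorial one. By \Cref{lem:combred-valid}, the inequality $t \lept u$ is universally valid if and only if it is combinatorially valid, i.e.\ there is a $\bullet$-reduction $f \colon V_{\interpG{u}} \partto V_{\interpG{t}}$: a colour-respecting partial map such that the image under $f$ of every maximal clique of $\interpG{u}$ contains a maximal clique of $\interpG{t}$. Since $\interpG{t}$ and $\interpG{u}$ are computable from $t$ and $u$ in linear time, it suffices to determine the complexity of deciding the existence of such an $f$. I would prove membership in $\Sigma^p_2$ and $\Sigma^p_2$-hardness separately.

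For membership, the key observation is that, although $\interpG{u}$ may have exponentially many maximal cliques, two auxiliary tasks are solvable in polynomial time. First, checking that a given vertex subset is a maximal clique is clearly polynomial. Second, and crucially, deciding whether a given set $S \subseteq V_{\interpG{t}}$ \emph{contains} a maximal clique of $\interpG{t}$ is polynomial: following the term structure of $t$, a single vertex is covered iff it lies in $S$, at a $\sqcap$-node (disjoint union) it suffices that one child be covered, and at a $\times$-node (join) both children must be covered, so one merely evaluates the monotone formula obtained from $t$ by reading $\sqcap$ as disjunction and $\times$ as conjunction. Consequently the combinatorial criterion reads $\exists f\,\forall M\,[\,M \text{ a maximal clique of } \interpG{u} \Rightarrow f(M) \text{ contains a maximal clique of } \interpG{t}\,]$, where $f$ and $M$ are of polynomial size and the matrix is polynomial-time checkable. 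Guessing $f$ and observing that the remaining universally quantified statement is in $\mathrm{coNP}$ (its negation guesses a bad $M$ and verifies the failure in polynomial time) places the problem in $\Sigma^p_2$.

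For hardness I would reduce from the canonical $\Sigma^p_2$-complete problem $\exists \vec{x}\,\forall \vec{y}\,\psi(\vec{x},\vec{y})$ with $\psi$ in disjunctive normal form. The design exploits the dictionary just described: I would build $t$ as a ``DNF cograph'', namely a $\sqcap$ over the disjuncts of $\psi$, each disjunct being a $\times$ of leaves coloured by the literals it contains, so that $f(M)$ containing a maximal clique of $\interpG{t}$ corresponds exactly to some disjunct of $\psi$ being satisfied. The term $u$ would have a top-level $\times$ of ``choice gadgets'' $(a_j \sqcap b_j)$, one per universal variable $y_j$, so that the maximal cliques of $\interpG{u}$ range over assignments to $\vec{y}$; the chosen vertices, routed by $f$ to like-coloured literal leaves of $t$, light up exactly the literals true under the current assignment, while the partial map $f$ itself encodes the existential assignment $\vec{x}$. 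Quantifier alternation then matches: a $\bullet$-reduction exists iff there is a choice of $\vec{x}$ (the map $f$) making every $\vec{y}$-assignment (every maximal clique $M$) satisfy some disjunct, i.e.\ iff $\exists \vec{x}\,\forall \vec{y}\,\psi$ is true.

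The main obstacle is \emph{consistency}. Because ``$f(M)$ contains a maximal clique of $\interpG{t}$'' is monotone in the set of lit leaves, nothing in the bare criterion prevents $f$ from lighting both a literal and its negation, or from lighting the occurrences of a single literal inconsistently across the disjuncts, which would let the existential player cheat. The bulk of the construction therefore goes into forcing genuine assignments: I would use the vertex colouring to pin down the admissible images of each gadget, and add auxiliary ``consistency'' gadgets whose only effect is to create, whenever $f$ encodes an inconsistent assignment, a maximal clique $M$ of $\interpG{u}$ for which $f(M)$ fails to cover any maximal clique of $\interpG{t}$. Getting these gadgets to interact correctly with the clause gadgets while preserving the cograph (term) structure is the delicate part; the remainder of the argument is routine bookkeeping.
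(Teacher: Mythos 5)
Your membership argument coincides with the paper's: guess the partial map $f$, observe that the remaining statement is coNP because ``does a given vertex set contain a maximal clique of a cograph'' is decided in polynomial time by evaluating the underlying term with $\sqcap$ read as disjunction and $\times$ as conjunction. That half is correct and complete.

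The hardness half has a genuine gap, and it sits exactly where you place ``the delicate part.'' You correctly diagnose the two obstacles --- nothing in the bare criterion forces $f$ to encode a \emph{consistent} assignment to $\vec{x}$, and since $f$ is a single function fixed uniformly over all maximal cliques $M$, one vertex of $\interpG{u}$ can light only one occurrence of its literal in $\interpG{t}$ --- but you then defer the gadgets that resolve them, and they are not routine bookkeeping: they are the entire content of the proof. The paper's mechanism is a multiplicity trick. Choose $K$ strictly larger than the total number of literal occurrences in the matrix, and set $E_L = \bigtimes_{m} \bigl(x_m^K \sqcap \overline{x_m}^K\bigr)$ on the left, $E_R = \bigtimes_{m} \bigl(x_m^K \times \overline{x_m}^K\bigr)$ and $A = \bigtimes_{m} \bigl(y_m^K \sqcap \overline{y_m}^K\bigr)$ on the right, with the matrix term also on the left. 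Every maximal clique of the left term must contain one side of each meet in $E_L$; covering, say, $\overline{x_m}^K$ in $E_L$ requires $K$ distinct preimages, and the right-hand side contains exactly $K$ vertices coloured $\overline{x_m}$, so none remain to light any occurrence of $\overline{x_m}$ in the matrix --- this is what makes the assignment encoded by $f$ consistent. Dually, the $K$ copies of the complementary literal are free and plentiful enough to light \emph{all} of its occurrences in the matrix simultaneously, which also disposes of your second obstacle (your single-vertex choice gadgets $(a_j \sqcap b_j)$ for the universal variables would fail for a literal occurring in several clauses; the paper gives the $y$-gadgets the same exponent $K$). Without this construction, or an equivalent one, the reduction is not established. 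One point in your favour: your insistence on a DNF matrix (so that ``contains a maximal clique of $\interpG{t}$'' means ``some disjunct is fully lit'') is the right normal form for a $\Sigma^p_2$-complete source problem; just make sure the existential-assignment gadget is itself part of what a maximal clique of the left term must contain, as otherwise nothing ties the images of the $x$-coloured vertices to a single valuation.
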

\begin{proof}
In view of~\Cref{lem: combinatorial and universal validity for terms with join and *} we can consider the equivalent problem of deciding the existence of a combinatorial reduction from $t$ to $u$, \textit{i.e.}~a $\bullet$-reduction from $\interpG{t}$ to $\interpG{u}$.

For containment in $\Sigma_2^p$, note that the existence of a reduction can be essentially written as
\[
\exists f : V_u \partto V_t. ~~
\underbrace{\forall~ C \text{ max. clique of } \interpG{u}. ~~ \overbrace{f(C) \text{ contains a max. clique of } \interpG{u}}^{\text{polytime}}}_{\text{coNP}}.
\]
In the above, all quantifiers involve polynomial-sized objects relative to $G$ and $H$. 
It hence suffices to count quantifiers and argue that we may indeed check the inner condition in polynomial time.

In general, checking whether a set of vertices contains a maximal clique may be hard.
However, in our case, since we are working with denotations of terms, we are not working with general graphs, but cographs.
In this case, there is a straightforward algorithm for deciding this problem.
For a term $t$, let the predicate $\operatorname{MClique}_{t}(s, V)$ for subterms $s$ of $t$ and sets $V$ of occurrences of variables in $t$ be defined as follows:
\begin{enumerate}
    \item For an occurrence $v$ of a variable,  
    $\operatorname{MClique}_{t}(v, V)$
    holds true if and only if $v \in V$.
    \item $\operatorname{MClique}_{t}(s \sqcap u, V) = \operatorname{MClique}_{t}(s, V) \lor \operatorname{MClique}_{t}(u, V)$.
    \item $\operatorname{MClique}_{t}(s \times u, V) = \operatorname{MClique}_{t}(s, V) \land \operatorname{MClique}_{t}(u, V)$.
\end{enumerate}
If we represent the set $V$ appropriately, the predicate $\operatorname{MClique}_{t}(s, V)$ can be evaluated in essentially linear time.
It is clear that $\operatorname{MClique}_{t}(s, V)$ holds true if and only if the set $V$ contains a maximal clique in $\interpG{t}$.

To show $\Sigma_2^p$-hardness, we reduce in polynomial time TQBF restricted to $\Sigma_2$ formulas to our problem. 
Given a quantified boolean formula
\[\exists x_1, \ldots, x_k \forall y_1, \ldots, y_n ~ \bigwedge_{i \in I} \bigvee_{j \in J_i} \ell_j, \]
with the $\ell_j$s being literals generated from $\{x_1, \ldots, x_k, y_1, \ldots, y_n\}$,
define $K = 1 + \sum\limits_{i \in I} |J_i|$. 
We may then produce in polynomial time the following terms whose variables are literals (where powers are to be understood as iterated $\times$):
\begin{itemize}
\item $E_L = \bigtimes\limits_{m = 1}^k x_m^K \sqcap \overline{x_m}^K$ and $E_R = \bigtimes\limits_{m = 1}^k x_m^K \times \overline{x_m}^K$
\item $A = \bigtimes\limits_{m = 1}^n y_m^K \sqcap \overline{y_m}^K$
\item $F = \bigtimes\limits_{i \in I} \bigsqcap\limits_{j \in J_i} \ell_{j}$
\end{itemize}
To conclude, we only need to argue that the formula is satisfiable if and only if the inequality
\[ E_L \times F \lept E_R \times A\]
is valid.

Assume that the formula is satisfiable.
Then there exists a valuation $\rho \colon \{x_1,\dots,x_k\} \to \{0,1\}$ of the $x_i$s such that for all valuations 
$\theta \colon \{y_1,\dots,y_n\} \to \{0,1\}$ of the $y_j$s, the combined valuation $(\rho, \theta)$ renders the formula true.

Pick any $\rho$ as above. 
Then there exists a partial mapping $f \colon E_R \times A \partto E_L \times F$ with the following properties:
\begin{enumerate}
    \item If $\rho(x_i) = 1$, the map $f$ sends all occurrences of 
    $\overline{x_i}$ in $E_R$ to an occurrence of 
    $\overline{x_i}$ in $E_L$, 
    such that the clique 
    $\overline{x_i}^K$ in $E_L$ is contained in the image of $f$ under he clique 
    $\overline{x_i}^K$ in $E_R$.
    Further, the image of the clique $x_i^K$ in $E_R$ covers all occurrences of $x_i$ in $F$.
    \item If $\rho(x_i) = 0$, the map $f$ has the same properties, with the roles of $x_i$ and $\overline{x_i}$ reversed.
    \item The image of the clique $y_m^K$ covers all occurrences of $y_m$ in $F$.
    \item The image of the clique $\overline{y_m}^k$ covers all occurrences of $\overline{y_m}$ in $F$.
\end{enumerate}
The mapping $f$ exists essentially because $K$ is chosen large enough to ensure that we can cover all occurrences of variables in $F$ as described above.

 Now, we claim that $f$ is a combinatorial reduction from $E_L \times F$ to $E_R \times A$.
 Let $C$ be maximal clique in $E_R \times A$.
 Then $C$ is of the form $C = E_R \times C'$, where $C'$ is a maximal clique in $A$.
 Further, $C'$ is of the form
 \[
     C' = \bigtimes\limits_{m = 1}^n z_m^K
 \] 
 where $z_m \in \{y_m, \overline{y_m}\}$.
 Let 
     $\theta \colon \{y_1,\dots,y_n\} \to \{0,1\}$
 be the valuation that sends $y_j$ to $1$ if $z_j = y_j$ and to $0$ if $z_j = \overline{y_j}$.
 Then, by assumption, some conjunction of literals 
 $\bigwedge_{i \in I} \ell_{j(i)}$ 
 holds true under the valuations $\rho$ and $\theta$.
 Consider the corresponding maximal clique 
 $\bigtimes_{i \in I} \ell_{j(i)}$
 in $F$.
 Let $E_L'$ be the maximal clique $\bigtimes_{m = 1}^k w_m^K$ of $E_L$ where 
 $w_m \in \{x_m, \overline{x_m}\}$ 
 and 
 $w_m = x_m$
 if and only if $\rho(x_m) = 0$.
 Then $E_L' \times \bigtimes_{i \in I} \ell_{j(i)}$ is a maximal clique in $E_L \times F$.
 By construction, every vertex in $E_L$ is the image of a vertex in $E_R$ under $f$.

 Now, by assumption, $\ell_{j(i)}$ is consistent with $\rho$ and $\theta$ in the sense that 
 if $\ell_{j(i)} \in \{x_m, \overline{x_m}\}$, then $\ell_j(i) = x_m$ if and only if $\rho(x_m) = 1$
 and similarly for $\theta$ and  $\ell_{j(i)} \in \{y_m, \overline{y_m}\}$.
 But this implies by construction that every $\ell_{j(i)} \in \{y_m, \overline{y}_m\}$ is the image of 
 a vertex in $C'$ under $f$ and every $\ell_{j(i)} \in \{x_m, \overline{x}_m\}$ is the image of a vertex
 in $E_R$ under $f$.
 This shows that $f$ is a combinatorial reduction.

Conversely, assume that a combinatorial reduction exists between 
$E_L \times F$ 
and 
$E_R \times A$,
witnessed by a partial map 
\[
    f\colon \interpG{E_R \times A} \partto \interpG{E_L \times F}.
\]
Observe that for all $m \in \{1,\dots,k\}$, either $f$ maps the clique $x_m^K$ of $\interpG{E_R}$ into $\interpG{E_L}$, or it maps the clique $\overline{x_m}^K$ of $\interpG{E_R}$ into $\interpG{E_L}$.

Hence, we can define a valuation $\rho \colon \{x_1,\dots,x_k\} \to \{0,1\}$ by letting $\rho(x_i) = 1$ if and only if the clique 
$\overline{x_m}^K$ of $\interpG{E_R}$ is mapped into $\interpG{E_L}$. It is now quite straightforward to check that $\rho$ is a satisfying assignment for the formula.
\end{proof}

Our argument for containment in $\Sigma_2^p$ can easily be adapted to formulas with parallelization applied to variables.
Using the observation that any formula with parallelization can be re-written in this shape in linear time, we obtain:

\begin{corollary}\label{cor: adding parallelization}
    The problem ``is the inequality $t \lept u$ valid?'' in the structure $(\ptWei, \sqcap, \times, (-)^*)$ is $\Sigma^p_2$-complete.
\end{corollary}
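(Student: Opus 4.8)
The plan is to separate hardness from containment: hardness is inherited for free from \Cref{thm: join-free case is Sigma2p complete}, while containment requires only a mild adaptation of the $\Sigma^p_2$ argument given there. For $\Sigma^p_2$-hardness, I would note that $(\ptWei, \sqcap, \times)$ is precisely the $(-)^*$-free fragment of $(\ptWei, \sqcap, \times, (-)^*)$, and that an inequality $t \lept u$ in which $(-)^*$ does not occur is universally valid over the richer signature exactly when it is universally valid over the smaller one, since the admissible instantiations (pointed degrees) and the interpretations of $\sqcap$ and $\times$ are unchanged. Hence the instances $E_L \times F \lept E_R \times A$ produced by the reduction in \Cref{thm: join-free case is Sigma2p complete} serve verbatim as hardness instances here, with no extra work.

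For containment I would first normalize. Using the pointed equivalences $(a \times b)^* \eqpt a^* \times b^*$, $(a \sqcap b)^* \eqpt a^* \sqcap b^*$ and $(a^*)^* \eqpt a^*$, any term can be rewritten in linear time into a provably equivalent one in which $(-)^*$ is applied only to variables, as already observed before \Cref{def:comb-valid-ext}. After this normalization $t$ and $u$ are $\sqcup$-free and denote graphs with parallelization information $\interpG{t} = (V_t, E_t, c_t, p_t)$ and $\interpG{u} = (V_u, E_u, c_u, p_u)$, and by \Cref{lem: combinatorial and universal validity for terms with join and *} universal validity of $t \lept u$ is equivalent to the existence of a combinatorial reduction, which in the $\sqcup$-free case is a single relation $f \subseteq V_u \times V_t$ satisfying the four conditions of \Cref{def:comb-valid-ext}.

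The $\Sigma^p_2$ procedure then mirrors the one in \Cref{thm: join-free case is Sigma2p complete}: existentially guess the relation $f$, which is a subset of $V_u \times V_t$ and hence of polynomial size, and then universally verify its correctness. The local constraints — respecting colours, respecting parallelization information via $p_t(y) \le p_u(x)$, and single-valuedness on non-parallelized vertices — are checked in polynomial time directly from $f$. The global condition ``for every maximal clique $C$ of $\interpG{u}$, the image $f(C)$ contains a maximal clique of $\interpG{t}$'' is a universal quantifier over polynomially-sized candidate cliques $C$ with a polynomial-time matrix: testing that $C$ is a maximal clique is polynomial, and testing that $f(C)$ contains a maximal clique of $\interpG{t}$ is polynomial via the predicate $\operatorname{MClique}$ from the proof of \Cref{thm: join-free case is Sigma2p complete}, which still applies because the normalized term denotes a cograph and the labels $p$ do not affect clique structure. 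This gives an $\exists\forall$ decision procedure, i.e.\ membership in $\Sigma^p_2$.

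I expect the only genuine point requiring care to be the passage from single-valued maps to relations: since parallelized vertices may be multivalued under $f$, I must confirm both that $f$ remains of polynomial size (it does, being bounded by $|V_u|\cdot|V_t|$) and that the inner maximal-clique-containment test is unaffected by the parallelization labels. Once the normalization to variable-only parallelization is in place, these are direct adaptations of the corresponding steps of \Cref{thm: join-free case is Sigma2p complete}, and no new combinatorial difficulty arises.
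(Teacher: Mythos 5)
Your proposal is correct and follows the same route as the paper: the paper likewise obtains containment by normalizing so that $(-)^*$ is applied only to variables and then adapting the guess-a-relation-and-verify argument of \Cref{thm: join-free case is Sigma2p complete}, with hardness inherited from the $(-)^*$-free fragment. You have merely spelled out details (polynomial size of the relation, the unaffected $\operatorname{MClique}$ test) that the paper leaves implicit.
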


The proof of Theorem \ref{thm: join-free case is Sigma2p complete} 
(together with the observation underlying Corollary \ref{cor: adding parallelization})
yields that validity of inequalities in 
$(\ptWei, \sqcap, \times, 1, \sqcup, (-)^*)$ 
is contained in $\Pi_3^p$ --- the additional universal quantifier being used to quantify over the slices.
This additional quantifier alternation cannot be avoided:

\begin{theorem}
    The problem ``is the inequality $t \lept u$ valid?'' in the structure $(\ptWei, \sqcap, \times, 1, \sqcup)$ is
    $\Pi^p_3$-complete.
\end{theorem}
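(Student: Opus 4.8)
The containment in $\Pi^p_3$ was already recorded above, so the content of the statement is $\Pi^p_3$-hardness. To see precisely where the three alternations live, I would first unfold \Cref{def:comb-valid-ext} and reorganise the existential over combinatorial reductions (the choices $s(t'),f_{t'}$ for distinct slices $t'$ being independent): validity of $t \lept u$ reads
\[ \forall\, t' \in \slice(t)\ \ \exists\, s(t'), f_{t'}\ \ \forall\, M\ \ \big(f_{t'}(M)\text{ contains a maximal clique of }\interpG{t'}\big), \]
where $M$ ranges over the maximal cliques of $\interpG{s(t')}$. All quantified objects are polynomial in $|t|+|u|$ and the matrix is polynomial-time checkable by the $\operatorname{MClique}$ predicate of \Cref{thm: join-free case is Sigma2p complete}. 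This displayed form dictates the whole plan: \emph{slices of the left term carry the outer $\forall$}, the choice of $s(t')$ together with the relation $f_{t'}$ (which may depend on the slice $t'$) carries the $\exists$, and \emph{maximal cliques of the right term carry the inner $\forall$}. I would then reduce from $\Pi_3$-TQBF.

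\textbf{Construction.} Given $\forall z_1\dots z_p\,\exists x_1\dots x_k\,\forall y_1\dots y_n\,\bigwedge_{i\in I}\bigvee_{j\in J_i}\ell_j$ with literals over all three blocks, I set $K = 1 + \sum_{i\in I}|J_i|$, treat each literal as its own colour, and define
\[ U = \bigtimes_l\big(z_l^K \times \overline{z_l}^K\big)\;\times\;\bigtimes_m\big(x_m^K \times \overline{x_m}^K\big)\;\times\;\bigtimes_m\big(y_m^K \sqcap \overline{y_m}^K\big), \]
\[ T = \bigtimes_l\big(\overline{z_l}^K \sqcup z_l^K\big)\;\times\;\bigtimes_m\big(x_m^K \sqcap \overline{x_m}^K\big)\;\times\;\bigtimes_i\bigsqcap_{j\in J_i}\ell_j, \]
where powers denote iterated $\times$. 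Since $U$ is $\sqcup$-free, $\slice(U)=\{U\}$ and $s$ is forced; the slices of $T$ are exactly the $z$-valuations, because the only $\sqcup$'s sit in the factor $\bigtimes_l(\overline{z_l}^K \sqcup z_l^K)$ and each $z_l$ is committed in exactly one of them. The maximal cliques of $\interpG{U}$ are the unique clique of the all-$\times$ part together with, for each $m$, one of $y_m^K,\overline{y_m}^K$, i.e.\ exactly the $y$-valuations. This is manifestly polynomial-time.

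\textbf{Correctness.} The construction reuses the $\Sigma^p_2$ mechanism of \Cref{thm: join-free case is Sigma2p complete} twice. Fix a slice $t'_\zeta$ and a maximal clique $M_\theta$ of $\interpG{U}$; then $f_{t'_\zeta}$ must cover, for each $l$, the $K$-clique present in the $z_l$-block of $T$, for each $m$ one of $x_m^K,\overline{x_m}^K$, and one literal of each clause-block. The $x$-blocks behave as before: covering the false-polarity clique consumes that polarity's $K$ right-copies and frees the true polarity's $K$ copies to cover its occurrences in the clauses, so $f_{t'_\zeta}$ encodes an $x$-valuation $\xi$, which (as $f_{t'_\zeta}$ depends on the slice) may depend on $\zeta$. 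The new point is that the $z$-blocks do the same thing \emph{at the level of slices}: the slice $\zeta$ decides which $z$-clique is present on the left, the present (false-polarity) clique consumes the matching $K$ right-copies of $U$, and the remaining $K$ copies of the true polarity are exactly what is available to cover $z$-literals in the clauses. Choosing $K = 1 + \sum_i |J_i|$ guarantees that a true literal always has enough free copies to cover all its clause-occurrences and a false literal has none. Hence every clause-block is coverable for all $\theta$ iff each clause is satisfied by $(\zeta,\xi,\theta)$ for all $\theta$, and feeding this through the displayed normal form shows that $T\lept U$ is combinatorially valid — equivalently, universally valid by \Cref{lem: combinatorial and universal validity for terms with join and *} — exactly when the QBF holds.

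\textbf{Main obstacle.} The delicate point is realising the \emph{outer} universal quantifier through slices of the left term while still letting the clauses depend on the chosen $z$-valuation. Slicing is a purely local operation on $\sqcup$, so a naive clause-by-clause encoding of ``$\zeta$ satisfies $C_i$'' would generate \emph{inconsistent} slices, and in particular a slice demanding that every clause be satisfied by $x,y$ alone, which over-constrains the instance and breaks the equivalence. The idea that makes the reduction correct is to commit each $z_l$ to a truth value once, in the single factor $\overline{z_l}^K \sqcup z_l^K$, and to let the clauses read this global commitment only indirectly, through resource-counting (which $z$-copies of $U$ remain free) rather than through any $\sqcup$ of their own; this keeps the slice set equal to the consistent $z$-valuations while making exactly the true $z$-literals available. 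Verifying both directions is then the same clique-chasing as in \Cref{thm: join-free case is Sigma2p complete}, now carried out with the extra bookkeeping for the $z$-copies.
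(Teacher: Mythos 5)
Your proposal is correct and follows essentially the same route as the paper: containment is inherited from the $\Sigma^p_2$ argument, and hardness is shown by reducing $\Pi_3$-TQBF, reusing the gadgets $E_L, E_R, A, F$ of \Cref{thm: join-free case is Sigma2p complete} and realising the outermost universal quantifier through the slices of the left-hand term via a factor $\bigtimes_l(z_l^K\sqcup\overline{z_l}^K)$. The only (harmless) deviation is your right-hand $z$-gadget: the paper uses $\bigtimes_l(z_l^K\sqcup\overline{z_l}^K)$ on the right as well, so that the right slice is forced to match the left one, whereas you use the $\sqcup$-free $\bigtimes_l(z_l^K\times\overline{z_l}^K)$ and a consumption/polarity-flip argument, which trivialises the slice map $s$ at the cost of the extra bookkeeping you describe.
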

\begin{proof}[Proof idea]
Inclusion in $\Pi^p_3$ is straightforward, following the proof of Theorem \ref{thm: join-free case is Sigma2p complete}.
    For the hardness part, we employ a similar construction to that in the proof of Theorem \ref{thm: join-free case is Sigma2p complete}.
    Given a quantified boolean formula
    \[\forall z_1, \ldots, z_l \exists x_1, \ldots, x_k \forall y_1, \ldots, y_n ~ \bigwedge_{i \in I} \bigvee_{j \in J_i} l_j\]
    we construct the inequality
    \[ Z_L \times E_L \times F \lept Z_R \times E_R \times A\]
    where $E_L$, $F$, $E_R$ and $A$ are taken as in the proof of Theorem \ref{thm: join-free case is Sigma2p complete} ($F$ might contain literals $z_i$ and $\overline{z_i}$), and where
    \[Z_L = \bigtimes_{i = 1}^l (z_i \sqcup \overline{z_i}) \qquad \text{and} \qquad
    Z_R = \bigtimes_{i = 1}^l (z_i^K \sqcup \overline{z_i}^K).\]
\end{proof}

\section{Towards a proof of completeness}

We do not know whether our axiom system yields a complete axiomatization of $(\ptWei, \sqcap, \times, 1)$.
As a first possible step towards a potential completeness proof, we show that we can restrict our attention to inequalities
$t \lept u$ where $t$ and $u$ each have pairwise distinct variables and the combinatorial reduction witness 
is a total bijective map.

We use this to show that our axioms allow us to derive all inequalities of the form $t \lept u$ where $u$ is 
\emph{square-free}.
Call a term square-free\footnote{The name comes from the fact that the graph induced by $\interpG{-}$ has no induced subgraph which is a square.} if, up to
associativity and commutativity, it can be rewritten in the following grammar, where $x$ ranges over variables:
\[ t, t' \bnfeq x \bnfalt t \times x \bnfalt 1 \bnfalt t \sqcap t'\]

\begin{proposition}\label{proposition : reduction wlog total bijective}
    Let $t$ and $u$ be terms over $(\ptWei, \sqcap, \times, 1)$.
    Assume that $t \lept u$.
    Then there exist terms $t'$ and $u'$ such that 
    $t \lept t' \lept u' \lept u$,
    the inequalities $t \lept t'$ and $u' \lept u$ are derivable from the axioms in Figure \ref{fig:axiom-base}, 
    and such that there exists a (total) bijective combinatorial reduction from $t'$ to $u'$.
\end{proposition}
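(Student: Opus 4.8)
The plan is to work throughout with combinatorial reductions rather than semantic ones: by the equivalence of universal and combinatorial validity (Theorem~\ref{lem:combred-valid}), the hypothesis $t \lept u$ yields a $\bullet$-reduction $f : V_u \partto V_t$, i.e.\ a colour-preserving partial map such that the image of every maximal clique of $\interpG{u}$ contains a maximal clique of $\interpG{t}$; fix in addition, for each maximal clique $M$ of $\interpG{u}$, a witnessing maximal clique $C_M \subseteq f(M)$ of $\interpG{t}$. The goal is to produce $t'$ with $t \lept t'$ derivable and a colour-preserving bijection $V_u \to V_{t'}$ that is a reduction. I would aim to keep $u' = u$ and do all the surgery on the domain side, building $t'$ as a cograph on the vertex set $V_u$ itself, so that the desired bijection is simply the identity. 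The only derivable moves needed on the $t$-side are the ``upward'' ones from Figure~\ref{fig:axiom-base}: duplicating a subterm as a clique via relevance $a \le a \times a$ or as an anticlique via idempotency $a \eqpt a \sqcap a$, discarding a $\sqcap$-summand via $a \sqcap b \le a$, and attaching a fresh variable via $s \le s \times y$ (using $1 \le y$).

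I would run the construction by induction on the cotree of $u$, proving the strengthened statement that any $\bullet$-reduction $f$ of $\interpG{t}$ into $\interpG{u}$ gives a cograph $t'$ on $V_u$ with $t \lept t'$ derivable and the identity a reduction of $t'$ into $u$. The base case $u = y$ (a single vertex of colour $c$) is immediate: validity forces $\interpG{t}$ to have an isolated vertex of colour $c$, so $t \eqpt c \sqcap t''$ up to associativity and commutativity, whence $t \lept c$ by $\sqcap$ being a lower bound; take $t' = y$. The disjoint-union case $u = u_1 \sqcap u_2$ is the clean inductive step, and it is here that duplication and pruning of $t$ get absorbed: restricting $f$ to $V_{u_1}$ and to $V_{u_2}$ gives $\bullet$-reductions witnessing $t \lept u_1$ and $t \lept u_2$ (the maximal cliques of each $u_i$ are maximal in $u$), so the induction hypothesis produces $t'_1, t'_2$ with $t \lept t'_i$ derivable and identity reductions; setting $t' = t'_1 \sqcap t'_2$, the derivation $t \lept t'$ follows from $\sqcap$ being the greatest lower bound, and covering holds since the maximal cliques of $u$ are exactly those of $u_1$ and of $u_2$. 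Injectivity and surjectivity of the final bijection are therefore not separate steps: a vertex of $t$ needed in several $u$-cliques is copied when the recursion descends through $\sqcap$-nodes (the same $t$ is reused in both branches), while superfluous maximal cliques of $t$ are pruned by the $\sqcap$-lower-bound discards in the base case.

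The genuine difficulty, and the step I expect to be the main obstacle, is the join case $u = u_1 \times u_2$. One cannot simply recurse into $u_1, u_2$ and recombine with $\times$: the inequality $t \lept u_1 \times u_2$ does not entail $t \lept u_1$, and, as the valid inequality $a \sqcap c \lept (a \sqcap b) \times c$ shows, the correct $t'$ need not be a $\times$-product at the top at all (there $t' = a \sqcap (b \times c)$, with $c$ joined only to $b$). Dually, one cannot repair totality by deleting the non-domain vertices of $u$, since a non-domain vertex sitting as a $\sqcap$-summand inside a $\times$-factor can only be removed by an \emph{upward} move, which is forbidden on the right-hand side. The resolution I would pursue is to \emph{grow} $t$ rather than shrink $u$: each vertex $y$ of $u_2$ must be attached, by $\times$-factor-addition moves $s \le s \times y$, onto exactly those subterms of the evolving $t'$ that correspond under $f$ to the witness cliques in which $y$ participates, and symmetrically for $u_1$. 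Making this precise, and showing that the attachments assemble into a single cograph on $V_u$ realised by a \emph{derivable} chain of upward moves, is the crux; I expect half-distributivity $(a \times b) \sqcap c \le a \times (b \sqcap c)$ to be exactly the axiom that licenses moving such a join past a meet and so placing $y$ at the required depth. Once the join case is settled, assembling the three cotree cases gives $t'$ with $t \lept t'$ derivable and the identity a reduction into $u' = u$, as required; the application to square-free $u$ then follows since for such terms the witness cliques, and hence the attachments, are forced.
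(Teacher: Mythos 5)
Your induction on the cotree of $u$ leaves its only non-trivial case --- $u = u_1 \times u_2$ --- as an acknowledged ``crux'' with a hope that half-distributivity will settle it. Since the base case and the $\sqcap$-case are essentially immediate, this means the entire content of the proposition is in the step you have not carried out, so the proposal is not a proof. The gap is aggravated by the fact that you have committed to a strictly stronger target than the statement requires: you insist on $u' = u$ and on building $t'$ as a cograph on all of $V_u$ with the identity as the reduction. The proposition deliberately allows you to also replace $u$ by a derivably smaller $u'$, and your reason for refusing that freedom is mistaken. You claim a non-domain vertex $v$ sitting as a $\sqcap$-summand inside a $\times$-factor ``can only be removed by an upward move''; but one does not remove just $v$ --- one removes the whole subtree hanging below the \emph{lowest} $\times$-node on the path from the root of $u$ to $v$. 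Replacing $A \times B$ by $A$ is a downward move on the right-hand side, derivable from $1 \lept B$ and monotonicity, and it preserves the reduction property: since all nodes below that $\times$ on the path to $v$ are $\sqcap$, the vertex $v$ is isolated inside $B$, so $u$ already has maximal cliques of the form $N \cup \{v\}$ whose image $f(N)$ covers a maximal clique of $\interpG{t}$, and these $N$ are exactly the maximal cliques that survive the deletion.

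The paper's proof exploits precisely this freedom and needs no induction on terms at all. It performs three local surgeries on the given $\bullet$-reduction $f \colon V_u \partto V_t$: (i) for each vertex of $\interpG{t}$ outside the range of $f$, delete the subtree of $t$ hanging below the lowest $\sqcap$ on the path to it (such a $\sqcap$ must exist, else every maximal clique of $\interpG{t}$ contains that vertex), giving a derivable $t \lept t'$ via $a \sqcap b \le a$; (ii) replace each vertex of $t$ with several $f$-preimages by a $\sqcap$-anticlique of copies, a derivable equivalence via $a = a \sqcap a$, to make $f$ injective; (iii) for each vertex of $u$ outside $\dom(f)$, perform the $\times$-pruning described above to get a derivable $u' \lept u$. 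If you want to salvage your approach, you would either have to supply the missing $\times$-case in full --- which amounts to showing how the witness cliques for the products $M_1 \cup M_2$ can be disentangled into a derivable chain of relevance, $\sqcap$-duplication and half-distributivity steps, a claim you have given no argument for --- or abandon the constraint $u' = u$ and argue as above.
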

\begin{proof}
    As argued at the end of Section \ref{section:combinatorial}, we can assume without limitation of generality that $t$ and $u$ are either $1$ or do not contain a $1$. If either term is $1$, the claim is trivial. Otherwise, there exists a combinatorial reduction $f \colon V_u \partto V_t$.
    
    Consider a vertex $v$ in $\interpG{t}$ not in the range of $f$.
    Consider the path $\langle o_1,\dots,o_m, v \rangle$ in the (tree associated with the) term $t$ leading to $v$.
    If $o_1 = \dots = o_m = \times$, then every maximal clique in $\interpG{t}$ contains $v$, contradicting the assumption that $f$ is a combinatorial reduction.
    Hence, there exists a maximal $j \in \{1,\dots,m\}$ with $o_j = \sqcap$.
    Let $t'$ be the term which is obtained from $t$ by removing the sub-tree with root $o_{j + 1}$ (letting $o_{m + 1} = v$).
    Then we can derive $t \lept t'$ since meets are lower bounds.
    The maximal cliques of $\interpG{t'}$ are the maximal cliques of $t$ that do not contain $v$.
    Since no maximal clique that contains $v$ contributes to $f$ being a combinatorial reduction, the co-restriction of $f$ to 
    $V_{t'}$ remains a combinatorial reduction.
    We can apply the above construction to remove all vertices outside the range of $f$ to ensure that $f$ is a surjective partial map -- potentially with smaller domain.
    
    If $f$ is not injective, say $f^{-1}(\{v\}) = \{w_1,\dots,w_m\}$, we can replace $v$ in $\interpG{t}$ with the $m$-fold meet $v \sqcap \dots \sqcap v$ (coloured with the same colour as $v$) and send each $w_i$ to a separate occurrence of $v$.
    It is straightforward to check that this yields a combinatorial reduction from $u$ to a term $t'$ which is derivably equivalent to $t$.
    
    In the case where $f$ is not total with $v \in V_u \setminus \dom(f)$,
    we consider the path $\langle o_1,\dots,o_m, v \rangle$ in the (tree associated with the) term $u$ leading to $v$.
    There must exist a maximal $j \in \{1,\dots,m\}$ with $o_j = \times$.
    Removing in $u$ the the sub-tree with root $o_{j + 1}$ (letting $o_{m + 1} = v$) yields a derivably smaller term $w$
    such that $f$ is still a combinatorial reduction from $\interpG{w}$ to $\interpG{t}$.
    Applying this construction to all vertices outside $\dom(f)$, we obtain a term $u'$ as claimed.
\end{proof}

Proposition \ref{proposition : reduction wlog total bijective} implies that a complete axiomatization for formulas of the form $t \lept u$ where $t$ and $u$ both have pairwise distinct variables yields a complete axiomatization of the whole theory: 
if $t$ and $u$ are arbitrary terms such that $t \lept u$ is universally valid, there exists without loss of generality a bijective combinatorial reduction from $u$ to $t$. This allows us to transport an injective colouring of the vertices of $\interpG{u}$ to an injective colouring of the vertices of $\interpG{t}$, yielding a valid and hence derivable inequality $t' \lept u'$ between terms with pairwise distinct variables. 
Substituting back the original variable names in the derivation of $t' \lept u'$ yields a derivation of $t \lept u$.

\begin{proposition}
Our axiom system is complete for inequalities $t \lept u$ where $u$ is square-free.
\end{proposition}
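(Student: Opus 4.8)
The goal is to prove that every universally valid inequality $t \lept u$ with $u$ square-free is derivable from the axioms in Figure~\ref{fig:axiom-base}. The plan is to proceed by induction on the structure of the square-free term $u$, exploiting the grammar $u \bnfeq x \bnfalt u \times x \bnfalt 1 \bnfalt u \sqcap u'$ that defines square-freeness. By Proposition~\ref{proposition : reduction wlog total bijective}, I may assume without loss of generality that there is a total bijective combinatorial reduction $f \colon V_u \to V_t$, and that neither term is (or contains) $1$; the $1$ cases are handled as at the end of Section~\ref{section:combinatorial}. The key structural fact I would extract first is a characterization of what $\interpG{u}$ looks like when $u$ is square-free: since $\times$ only ever multiplies by a single variable $x$, the cograph $\interpG{u}$ has a particularly simple shape, and its maximal cliques can be described recursively. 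Concretely, a maximal clique of $\interpG{u \times x}$ is a maximal clique of $\interpG{u}$ together with the vertex for $x$, while the maximal cliques of $\interpG{u \sqcap u'}$ are just the (disjoint) union of those of the two pieces.

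First I would establish the base and product cases. When $u$ is a single variable $x$, a bijective combinatorial reduction forces $\interpG{t}$ to be a single vertex coloured $x$, so $t = x = u$ and the inequality is reflexivity. For $u = u_0 \times x$, the unique added vertex $w$ for $x$ lies in \emph{every} maximal clique of $\interpG{u}$; tracing $f$ back, the vertex $f(w)$ must then lie in every maximal clique of $\interpG{t}$, which by the cograph analysis means $t$ factors (up to the derivable monoid and commutativity axioms) as $t_0 \times x$ with a bijective combinatorial reduction from $t_0$ to $u_0$. I would then invoke the induction hypothesis to derive $t_0 \lept u_0$ and apply monotonicity of $\times$ to conclude $t_0 \times x \lept u_0 \times x$, i.e. $t \lept u$.

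The meet case $u = u_1 \sqcap u_2$ is where the real content lies, and I expect it to be the main obstacle. Here the vertices of $\interpG{u}$ split as a disjoint union $V_{u_1} \sqcup V_{u_2}$, and the maximal cliques of $\interpG{u}$ are exactly those of $\interpG{u_1}$ together with those of $\interpG{u_2}$. The difficulty is that a maximal clique of $\interpG{t}$ need not respect this split: the bijection $f$ can interleave vertices coming from $u_1$ and $u_2$ arbitrarily across the cograph structure of $t$. The plan is to use the combinatorial-reduction condition to show that $\interpG{t}$ must decompose compatibly, namely that $t$ is derivably equal (using commutativity, associativity, and crucially the half-distributivity axiom $(a \times b) \sqcap c \le a \times (b \sqcap c)$ together with distributivity of $\times$ over $\sqcap$ where available) to a term $t_1 \sqcap t_2$ with bijective combinatorial reductions $t_i \to u_i$. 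I would argue this by examining, for each maximal clique $C$ of $\interpG{t}$, whether $f^{-1}(C)$ lands entirely in $V_{u_1}$ or in $V_{u_2}$: the clique condition guarantees that $f^{-1}(C)$ contains a maximal clique of $u$, which lives wholly in one of the two sides, and I would leverage square-freeness of $u$ (which constrains how cliques overlap) to push this containment to a clean partition of $V_t$. Once $t \equivpt t_1 \sqcap t_2$ is derived, the induction hypothesis gives $t_i \lept u_i$, and since $\sqcap$ is the greatest lower bound (so $t_1 \sqcap t_2 \lept u_1$ and $t_1 \sqcap t_2 \lept u_2$ follow by composing with projections), monotonicity of $\sqcap$ yields $t_1 \sqcap t_2 \lept u_1 \sqcap u_2$, completing the induction.

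Finally I would assemble these cases, noting that the reductions $t \lept t_1 \sqcap t_2$ (or the product reshufflings) are all instances of derivable equalities from Figure~\ref{fig:axiom-base}, so that the full chain $t \lept \dots \lept u$ is a derivation. The one genuine subtlety to be careful about is that Proposition~\ref{proposition : reduction wlog total bijective} produces a bijective reduction from $u$ to $t$ (going the ``wrong'' way), so the induction should be set up on the shape of $u$ with the reduction read in the correct direction, and the replacement of variables by pairwise-distinct ones (as in the remark following that proposition) must be undone at the end to recover a derivation for the original $t \lept u$.
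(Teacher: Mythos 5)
Your overall skeleton (reduce to pairwise distinct variables via Proposition~\ref{proposition : reduction wlog total bijective}, then induct on the square-free grammar of $u$) matches the paper, but the argument for the key case $u = u_0 \times x$ has a genuine gap. You claim that since the $x$-vertex $w$ lies in every maximal clique of $\interpG{u}$, the vertex $f(w)$ lies in every maximal clique of $\interpG{t}$, forcing $t$ to factor as $t_0 \times x$. The reduction condition only guarantees that $f(M)$ \emph{contains} some maximal clique of $\interpG{t}$ for each maximal clique $M$ of $\interpG{u}$; it does not place $f(w)$ inside that clique. Concretely, $a \sqcap x \lept a \times x$ is universally valid and witnessed by the obvious total bijection, yet $a \sqcap x$ does not factor as $t_0 \times x$ (the image of the unique maximal clique $\{v_a,v_x\}$ of $\interpG{a\times x}$ contains the maximal clique $\{w_a\}$ of $\interpG{a\sqcap x}$, which avoids $f(v_x)$). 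So your structural factorization of $t$ fails, and with it the induction step. The paper's route around this is not combinatorial at all: one first shows by an easy induction on $t$ (using pointedness for the base cases and half-distributivity for the $\sqcap$ case, and using that $x$ occurs at most once in $t$) that $t \lept t[1/x] \times x$ is \emph{derivable}; then, since $x$ does not occur in $u_0$, substituting the degree $1$ for $x$ shows that $t[1/x] \lept u_0$ is universally valid, hence derivable by the induction hypothesis, and monotonicity of $\times$ finishes the case.

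You have also misplaced the difficulty in the meet case. There is no need to decompose $t$ as $t_1 \sqcap t_2$ (and your proposed decomposition via half-distributivity is itself doubtful: $a \times b \lept a \sqcap b$ is valid without $a \times b$ splitting compatibly). Since $u_1 \sqcap u_2 \lept u_i$ is an axiom, universal validity of $t \lept u_1 \sqcap u_2$ gives universal validity of $t \lept u_1$ and $t \lept u_2$; both $u_i$ are square-free, so the induction hypothesis yields derivations of each, and the greatest-lower-bound axiom recombines them. This is why the paper dismisses the $\sqcap$ case (together with the variable and $1$ cases) as obvious; the entire content of the proposition sits in the $\times$ case you handled incorrectly.
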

\begin{proof}
    By the remark following Proposition \ref{proposition : reduction wlog total bijective} we may assume that the variables are pairwise distinct in both $t$ and $u$. We show the claim by induction over $u$. 
If $u$ is a variable, the constant $1$, or of the form $u_0 \sqcap u_1$, this is obvious. 

The remaining case is $u = u' \times x$.
An easy induction over $t$ shows that we can derive $t \lept t[1/x] \times x$.
Since $x$ does not appear in $u'$, we must have $t[1/x] \lept u'$.
By the induction hypothesis, this last inequality is derivable from our axioms.
Now, combining this derivation with monotonicity of $\times$ yields a derivation of 
$t[1/x] \times x \lept u' \times u$
which in total yields a derivation of $t \lept u$.
\end{proof}

\section*{Acknowledgments}
We are grateful to Manlio Valenti and {\fontencoding{T5}\selectfont{}Lê Thành Dũng Nguyễn} for discussions, and also to Matteo Acclavio for some pointers to the literature.

\bibliographystyle{eptcs}
\bibliography{biblio}

\end{document}